\begin{document}
\title{
Buchdahl-like transformations for perfect fluid spheres
}
\author{Petarpa Boonserm}
\email{Petarpa.Boonserm@mcs.vuw.ac.nz}
\homepage{http://www.mcs.vuw.ac.nz/research/people/Pertarpa/name}
\affiliation{School of Mathematics, Statistics, and Computer Science, 
Victoria University of Wellington, PO Box 600, Wellington, New Zealand\\}
\author{Matt Visser}
\email{matt.visser@vuw.ac.nz}
\homepage{http://www.mcs.vuw.ac.nz/~visser}
\affiliation{School of Mathematics, Statistics,  and Computer Science, 
Victoria University of Wellington, PO Box 600, Wellington, New Zealand\\}
\date{2 July 2007; 
\LaTeX-ed \today} 
\begin{abstract}

\noindent
In two previous articles [Phys.~Rev.~{\bf D71} (2005) 124307 (gr-qc/0503007), 
and gr-qc/0607001] we have discussed several ``algorithmic''
techniques that permit one (in a purely mechanical way) to generate
large classes of general relativistic static perfect fluid spheres.
Working in Schwarzschild curvature coordinates, we used these
algorithmic ideas to prove several ``solution-generating theorems''
of varying levels of complexity. In the present article we consider
the situation in other coordinate systems: In particular, in general diagonal coordinates we shall generalize our previous theorems, in
isotropic coordinates we shall encounter a variant of the so-called
``Buchdahl transformation'', while in other coordinate systems (such as Gaussian polar coordinates, Synge isothermal coordinates, and Buchdahl coordinates) we
shall find a number of more complex ``Buchdahl-like
transformations'' and ``solution-generating theorems'' that may be
used to investigate and classify the general relativistic static
perfect fluid sphere. Finally by returning to general diagonal coordinates and making a suitable \emph{ansatz} for the functional form of the metric components we place the Buchdahl transformation in its most general possible setting.  
\end{abstract}

\pacs{04.20.-q, 04.40.Dg, 95.30.Sf }
\keywords{Fluid spheres; arXiv:yymm.nnnn [gr-qc]}

\maketitle
\newcommand{\diff}[1]{\ensuremath{\mathrm{d}{#1}}}
\newcommand{\dx}[1]{\diff{#1}}
\newcommand{\dr}{\ensuremath{\frac{\mathrm{d}\phantom{r}}{\dx{r}}}}
\newcommand{\eqprime}[1]{\tag{\ref{#1}$^\prime$}}
\newcommand{\dOne}{\ensuremath{\delta_1}}
\newcommand{\dTwo}{\ensuremath{\delta_2}}
\newcommand{\dThr}{\ensuremath{\delta_3}}
\newcommand{\gth}[1]{\textsf{T}_\textsf{#1}}
\newtheorem{theorem}{Theorem}
\newtheorem{lemma}{Lemma}
\newtheorem{corollary}{Corollary}
\newtheorem{definition}{Definition}
\def\d{{\mathrm{d}}}
\def\implies{\Rightarrow}
\def\arctanh{{\mathrm{arctanh}}}
\def\SIM{\triangleq}
\def\txt{\textstyle}
\def\lint{\hbox{\Large $\displaystyle\int$}} 
\def\hint{\hbox{\Huge $\displaystyle\int$}} 

\section{Introduction}

Understanding perfect fluid spheres in general relativity is an 
important topic, recently deemed worthy of a brief chapter in an updated edition of the premier book summarizing and surveying ``exact solutions'' in general relativity~\cite{exact}. For considerably more details, the extensive review in~\cite{Delgaty} is an essential reference. Physically, perfect fluid spheres are a first approximation to
realistic models for a general relativistic star. The study of perfect fluid spheres is a long-standing topic with a venerable history~\cite{exact, Delgaty, Skea, Buchdahl, Bondi, Wyman, Hojman-et-al}, and continuing interest~\cite{Martin0, Martin1, Rahman, Lake}. Nevertheless the topic continues to provide surprising and
novel results~\cite{Petarpa1, Petarpa2, Petarpa:thesis, Petarpa:Greece, Petarpa:MG11}. In particular, as derived in references~\cite{Petarpa1}
and~\cite{Petarpa2}, and as further described in references~\cite{Petarpa:thesis, Petarpa:Greece, Petarpa:MG11},   we have developed several ``algorithmic'' techniques that
permit one to generate large classes of perfect fluid spheres from
first principles in a purely mechanical way. We now generalize these
algorithmic ideas, originally derived for Schwarzschild curvature
coordinates, (and to a lesser extent isotoropic coordinates), to a number of other coordinate systems. This
sometimes leads to much simpler results, and sometimes more general results. In this current article we shall aim to:
\begin{enumerate}

\item 
Derive a
number of new and significant ``transformation theorems'' and ``solution generating
theorems'' that allow us to map perfect fluid spheres into perfect
fluid spheres. 

\item
Investigate regularity conditions at the centre of the fluid sphere from the geometrical, physical, and purely mathematical points of view.

\item Develop a coherent picture of how these various theorems and various coordinate systems inter-relate to one another. 

\end{enumerate}

\section{Strategy}

We start the discussion by considering static spherically symmetric
distributions of matter, which implies (purely by symmetry) that in orthonormal components
the stress energy tensor takes the specific form
\begin{equation}
\label{T_a_b}
T_{\hat a\hat b} = \left[ 
\begin{array}{cccc}
\rho&0&0&0\\ 
0&p_r&0&0\\ 
0&0&p_t&0\\ 
0&0&0&p_t 
\end{array}\right].
\end{equation}
If the matter is a perfect fluid, then in addition we must have
\begin{equation}
p_{r} = p_{t}.
\end{equation}
Invoking the Einstein equations, this then implies a differential
relation on the spacetime geometry arising from equating the
appropriate orthonormal components of the Einstein tensor
\begin{equation}
\label{ortho}
G_{\hat\theta\hat\theta} = G_{\hat r\hat r} = G_{\hat\phi\hat\phi}.
\end{equation}
Equivalently, one could work with the appropriate orthonormal
components of the Ricci tensor
\begin{equation}
\label{ortho2}
R_{\hat\theta\hat\theta} = R_{\hat r\hat r} = R_{\hat\phi\hat\phi}.
\end{equation}
In terms of the metric components, this now leads to an ordinary
differential equation [ODE], which constrains the spacetime geometry
for \emph{any} general relativistic static perfect fluid sphere.

This equation constraining the spacetime geometry of static perfect fluid spheres is now analyzed in several different coordinate systems: general diagonal coordinates, Schwarzschild curvature coordinates, isotropic coordinates, and lesser-known coordinate systems such as Gaussian polar coordinates, Synge isothermal coordinates, and Buchdahl coordinates.
In the various coordinate systems we discuss below, we exhibit the
metric (in the form of the line element), the orthonormal components ($G_{\hat r\hat r}$,
$G_{\hat\theta\hat\theta}$, and and $G_{\hat t\hat t}$) of the Einstein tensor, and the ODE resulting from the pressure
isotropy condition $G_{\hat r\hat r} = G_{\hat\theta\hat\theta}$. We
then classify this ODE as to its type, and whenever possible, derive
suitable transformation theorems and solution generating theorems.

To place the use of ``unusual'' coordinate systems in perspective, one
might observe that Finch and Skea \cite{Skea} estimate that about 55\%
of all work on fluid spheres is carried out in Schwarzschild curvature
coordinates, that isotropic coordinates account for about 35\% of related 
research, and that the remaining 10\% is spread over multiple
specialized coordinate systems. We take the viewpoint that the
``unusual'' coordinate systems are useful \emph{only} insofar as they enable
us to obtain particularly simple analytic results, and the central
point of this article is to see just how much we can do in this
regard. We also consider what conditions are required on the spacetime metric and transformation theorems to guarantee regularity and physicality of both the geometry and the matter distribution near the center of the configuaration.

\def\txt{\textstyle}
\def\lint{\hbox{\Large $\displaystyle\int$}} 
\def\hint{\hbox{\Huge $\displaystyle\int$}}  

\section{General diagonal coordinates}
We begin by setting the notation. Choose coordinates to put the metric into the form:
\begin{equation} 
\label{general_diagonal_1}
\d s^2 = - \zeta(r)^2 \; \d t^2 + {\d r^2\over B(r)} + R(r)^2 \d \Omega^2.
\end{equation}
A brief calculation  yields
\begin{equation}
G_{\hat r\hat r} = - \frac{1-B (R')^2}{R^2} +{2B \zeta' R'\over R\zeta},
\end{equation}
\begin{equation}
G_{\hat\theta\hat\theta} = G_{\hat\phi\hat\phi} = \frac{1}{2}\; 
\frac{2 B \zeta' R' + 2 \zeta R'' B + \zeta B' R' + 2 R \zeta'' B + R \zeta' B'}{R \zeta},
\end{equation}
and
\begin{equation}
G_{\hat t\hat t} = {1-B (R')^2 \over R^2} +{2B R''-B'R'\over R}.
\end{equation}

\subsection{ODEs}

If this spacetime geometry is to be a perfect fluid sphere, then we
must have $G_{\hat r\hat r} = G_{\hat\theta\hat\theta}$. This isotropy
constraint supplies us with an ODE, which we can write in the form:
\begin{eqnarray}
\label{ODE_for_B_diagonal_1}
[R(R\zeta)']B' +2[R R'' \zeta+R^2\zeta''-RR'\zeta'-(R')^2\zeta] B 
+ 2 \zeta=0.
\end{eqnarray}
In this form the ODE is a
first-order linear non-homogeneous ODE in $B(r)$, and hence \emph{explicitly}
solvable. (Though it must be admitted that the explicit solution is sometimes rather messy.) 
The physical impact of this ODE (\ref{ODE_for_B_diagonal_1}) is that it reduces the freedom to choose the three \emph{a priori}
arbitrary functions in the general diagonal spacetime metric (\ref{general_diagonal_1})
to two arbitrary functions, (and we still have some remaining coordinate freedom in the $r$--$t$ plane, which then allows us to specialize the ODE even further). 

The \emph{same} ODE  (\ref{ODE_for_B_diagonal_1}) can also be rearranged as:
\begin{eqnarray}
\label{ODE_for_B_diagonal_2}
[2 R^2 B] \zeta'' + [R^2 B'-2BR R' ] \zeta' 
+ [2B(R R''-[R']^2) + R B' R' +2 ] \zeta = 0.
\end{eqnarray}
In this form the ODE is a
second-order linear and homogeneous ODE in $\zeta(r)$. While this ODE is not explicitly solvable in closed form, it is certainly true that a lot is known about its generic behaviour.

We can also view the two equivalent ODEs  (\ref{ODE_for_B_diagonal_1}) and  (\ref{ODE_for_B_diagonal_2})  as an ODE for $R(r)$:
\begin{eqnarray}
\label{ODE_for_B_diagonal_3}
[2B\zeta] R R'' + [B'\zeta-2B\zeta']R R' -[2B\zeta](R')^2 +[2B\zeta''+B'\zeta'] R^2 + 2\zeta= 0.
\end{eqnarray}
Viewed in this manner it is a second-order nonlinear ODE of no discernible special form --- and this approach does not seem to lead to any useful insights.

\subsection{Solution generating theorems}

Two rather general  solution generating theorems can be derived for the general diagonal line element.

\begin{theorem} [\bf General diagonal 1]
Suppose we adopt general diagonal coordinates and suppose that $\{ \zeta(r), B(r), R(r) \}$ represents a perfect fluid sphere.
Define
\begin{equation}
\label{delta_0}
\Delta(r) = \lambda \left({R(r) \zeta(r) \over (R(r) \zeta(r))'} \right)^2 \, \exp\left(2 \int{{\zeta'(r) \over \zeta(r)} \cdot {(R'(r) \zeta(r) - R(r) \zeta'(r)) \over (R'(r) \zeta(r) + R(r) \zeta'(r))}\, \d r} \right).
\end{equation}
Then for all $\lambda$, the geometry defined by holding $\zeta(r)$ fixed and
setting
\begin{equation} 
\label{general_diagonal_1b}
\d s^2 = - \zeta(r)^2 \; \d t^2 + {\d r^2\over B(r) + \Delta(r)} + R(r)^2 \d \Omega^2
\end{equation}
is also a perfect fluid sphere. That is, the mapping
\begin{equation}
\gth{gen1}(\lambda): \left\{ \zeta, B, R \right\} \mapsto 
\left\{ \zeta, B + \Delta, R \right\}
\end{equation}
takes perfect fluid spheres into perfect fluid spheres
\end{theorem}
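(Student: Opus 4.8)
The plan is to exploit the structural observation already made above: when $\zeta(r)$ and $R(r)$ are held fixed and regarded as background data, the pressure-isotropy constraint in the form (\ref{ODE_for_B_diagonal_1}) is a \emph{first-order linear inhomogeneous} ODE for $B(r)$. Since $\{\zeta,B,R\}$ is assumed to be a perfect fluid sphere, $B$ is one solution of this ODE. The candidate geometry (\ref{general_diagonal_1b}) has the \emph{same} $\zeta$ and $R$, so it is a perfect fluid sphere precisely when $B+\Delta$ is again a solution of (\ref{ODE_for_B_diagonal_1}). By linearity, this holds if and only if $\Delta(r)$ satisfies the associated homogeneous equation
\begin{equation}
[R(R\zeta)']\,\Delta' + 2\left[R R'' \zeta + R^2\zeta'' - R R'\zeta' - (R')^2\zeta\right]\Delta = 0 .
\end{equation}

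The second step is simply to solve this homogeneous equation. It is first-order linear, hence separable, giving
\begin{equation}
\frac{\Delta'}{\Delta} = - \frac{2\left[R R'' \zeta + R^2\zeta'' - R R'\zeta' - (R')^2\zeta\right]}{R\,(R\zeta)'},
\end{equation}
and integrating this fixes $\Delta$ up to a single overall multiplicative constant, which is exactly the parameter $\lambda$ appearing in the theorem. Thus the only thing left to establish is that the explicit closed form $\Delta(r)$ written in (\ref{delta_0}) \emph{is} this solution, i.e.\ that its logarithmic derivative reproduces the right-hand side above.

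That verification is the one genuinely computational step. I would take $\ln\Delta = \ln\lambda + 2\ln(R\zeta) - 2\ln (R\zeta)' + 2\int \frac{\zeta'}{\zeta}\cdot\frac{R'\zeta - R\zeta'}{R'\zeta + R\zeta'}\,\d r$ and differentiate to get
\begin{equation}
\frac{\Delta'}{\Delta} = 2\frac{(R\zeta)'}{R\zeta} - 2\frac{(R\zeta)''}{(R\zeta)'} + 2\frac{\zeta'\,(R'\zeta - R\zeta')}{\zeta\,(R'\zeta + R\zeta')} .
\end{equation}
Putting the right-hand side over the common denominator $R\zeta\,(R\zeta)'$, expanding $(R\zeta)' = R'\zeta + R\zeta'$ and $(R\zeta)'' = R''\zeta + 2R'\zeta' + R\zeta''$, the $RR'\zeta\zeta'$ and $R^2(\zeta')^2$ cross terms cancel and the numerator collapses to $-2\zeta\left[R R'' \zeta + R^2\zeta'' - R R'\zeta' - (R')^2\zeta\right]$, which is precisely the separated equation from the previous step. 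This algebraic collapse is the ``hard part'' of the proof, but it is a finite, essentially forced cancellation rather than a real obstacle.

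Finally I would note that the construction is legitimate wherever the background data are regular --- in particular where $\zeta\neq 0$ and $(R\zeta)'\neq 0$, so that $\Delta$ is finite and the line element (\ref{general_diagonal_1b}) is well defined --- and that, because $\zeta$ and $R$ are untouched and the new metric satisfies $G_{\hat r\hat r}=G_{\hat\theta\hat\theta}$ by construction, the associated stress-energy automatically has $p_r=p_t$. Hence for every $\lambda$ the map $\gth{gen1}(\lambda):\{\zeta,B,R\}\mapsto\{\zeta,B+\Delta,R\}$ sends perfect fluid spheres to perfect fluid spheres, as claimed.
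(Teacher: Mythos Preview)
Your proof is correct and follows essentially the same route as the paper: exploit the linearity of (\ref{ODE_for_B_diagonal_1}) in $B$ to reduce the problem to the homogeneous first-order ODE for $\Delta$, then identify its one-parameter family of solutions. The only cosmetic difference is that the paper integrates (with an integration by parts) to \emph{derive} the closed form (\ref{delta_0}), whereas you differentiate (\ref{delta_0}) to \emph{verify} it; the underlying argument is the same.
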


\begin{proof}[Proof for Theorem 1]
The proof is based on the techniques used in \cite{Petarpa1, Petarpa:thesis}. No new principles are involved and we quickly sketch the argument.
Assume that $\left\{ \zeta(r), B(r), R(r) \right\}$ is a solution for equation (\ref{ODE_for_B_diagonal_1}).
Under what conditions does $\{ \zeta(r),\tilde{B}(r), R(r) \}$  also satisfy equation (\ref{ODE_for_B_diagonal_1})?
Without loss of generality, we write
\begin{equation}
\tilde{B}(r) = B(r) + \Delta(r).
\end{equation}
Substitute $\tilde{B}(r)$ in equation ({\ref{ODE_for_B_diagonal_1}})
\begin{eqnarray}
\label{ODE_for_B_diagonal_1_2}
&&\!\!\!\!\!\!\!\!
[R(R\zeta)'](B + \Delta)' +2[R R'' \zeta+R^2\zeta''-RR'\zeta'-(R')^2\zeta] (B + \Delta) 
 + 2 \zeta=0.
\end{eqnarray} 
Rearranging we see
\begin{eqnarray}
\label{ODE_for_B_diagonal_13}
&&\!\!\!\!\!\!\!\!
[R(R\zeta)']B' +2[R R'' \zeta+R^2\zeta''-RR'\zeta'-(R')^2\zeta] B 
 + 2 \zeta
\\
\nonumber
&&+[R(R\zeta)']\Delta' +2[R R'' \zeta+R^2\zeta''-RR'\zeta'-(R')^2\zeta]  \Delta=0.
\end{eqnarray}
But we know that the first line in equation (\ref{ODE_for_B_diagonal_13}) is zero,  because $\{R, \,\zeta,\,  B\}$ corresponds by hypothesis to a perfect fluid sphere.
Therefore
\begin{equation}
\label{ODE_for_B_diagonal_1_3}
[R(R\zeta)']\Delta' +2[R R'' \zeta+R^2\zeta''-RR'\zeta'-(R')^2\zeta]  \Delta=0,
\end{equation}
which is an ordinary homogeneous first order differential equation in $\Delta$.
A straightforward calculation, including an integration by parts, now leads to
\begin{equation}
\label{delta_01}
\Delta = \lambda \left({R \zeta \over (R \zeta)'} \right)^2 \, \exp\left(2 \int{{\zeta' \over \zeta} \cdot {(R' \zeta - R \zeta') \over (R' \zeta + R \zeta')}\, \d r} \right)
\end{equation}
as required.
\end{proof}

\begin{theorem}[\bf General diagonal 2]
Suppose we adopt general diagonal coordinates, and suppose that $\{ \zeta(r), B(r), R(r)\}$ represents a perfect fluid sphere. Define
\begin{equation}
Z(r) = \sigma + \epsilon \int{{R(r) \over \sqrt{B(r)} \, \zeta(r)^2} \, \d r}.
\end{equation}
Then for all $\sigma$ and $\epsilon$, the geometry defined by holding $B(r)$ and
$R(r)$ fixed and setting
\begin{equation} 
\d s^2 = - \zeta(r)^2 \, Z(r)^2\; \d t^2 + {\d r^2\over B(r)} + R(r)^2 \d \Omega^2
\end{equation}
is also a perfect fluid sphere.
That is, the mapping
\begin{equation}
\gth{gen2}(\sigma,\epsilon):
\left\{ \zeta, B, R \right\} \mapsto \left\{\zeta \; Z(\zeta, B, R), B, R \right\}
\end{equation}
takes perfect fluid spheres into perfect fluid spheres.
\end{theorem}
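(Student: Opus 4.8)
The plan is to read the pressure-isotropy constraint in the form~(\ref{ODE_for_B_diagonal_2}), that is, as a second-order linear homogeneous ODE in $\zeta(r)$ whose coefficients are built purely from $B(r)$ and $R(r)$. By hypothesis $\{\zeta,B,R\}$ is a perfect fluid sphere, so $\zeta$ is \emph{one} solution of this ODE for the given $B$ and $R$. Holding $B$ and $R$ fixed and asking which other functions $\tilde\zeta(r)$ also solve~(\ref{ODE_for_B_diagonal_2}) is then nothing more than the classical problem of completing the solution space of a linear second-order ODE once one solution is known --- i.e.\ reduction of order.

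Concretely, I would write $\tilde\zeta = \zeta\, Z$ and substitute into~(\ref{ODE_for_B_diagonal_2}). Expanding $(\zeta Z)'$ and $(\zeta Z)''$ and collecting terms, the coefficient of the undifferentiated $Z$ is exactly the left-hand side of~(\ref{ODE_for_B_diagonal_2}) evaluated on $\zeta$, which vanishes by hypothesis. What survives is a first-order linear homogeneous ODE for $Z'$,
\[
[2R^2B]\,\zeta\,Z'' + \big(2[2R^2B]\,\zeta' + [R^2B'-2BRR']\,\zeta\big)Z' = 0,
\]
equivalently $Z''/Z' = -2\zeta'/\zeta - \big(B'/(2B) - R'/R\big)$. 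Integrating once, and using $\int\big(B'/(2B)-R'/R\big)\,\d r = \ln(\sqrt{B}/R)$, gives $Z' = \epsilon\,R/(\sqrt{B}\,\zeta^2)$; integrating again yields $Z = \sigma + \epsilon\int R/(\sqrt{B}\,\zeta^2)\,\d r$, precisely the claimed form. Since $B$ and $R$ are untouched and $\tilde\zeta = \zeta Z$ satisfies the same isotropy ODE, the triple $\{\zeta Z,B,R\}$ is again a perfect fluid sphere, so $\gth{gen2}(\sigma,\epsilon)$ maps perfect fluid spheres to perfect fluid spheres.

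Everything here is routine; the two points that need a little care are (i) checking that the undifferentiated-$Z$ terms really do collapse to the original ODE acting on $\zeta$, so the equation genuinely drops in order, and (ii) noticing that the integrating factor collapses to the clean expression $R/\sqrt{B}$ --- this is the one place where the specific structure of the coefficients in~(\ref{ODE_for_B_diagonal_2}) does the work and turns a formal reduction of order into a genuinely usable closed-form transformation. I would also remark that the constants $\sigma$ and $\epsilon$ simply parametrize the two-dimensional solution space of the linear ODE: $\sigma$ rescales back to the original solution while $\epsilon$ switches on the linearly independent partner, so the theorem is the assertion that knowing one admissible $\zeta$ for a fixed $(B,R)$ pair is equivalent to knowing all of them.
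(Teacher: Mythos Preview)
Your proof is correct and follows essentially the same route as the paper: reduction of order on the second-order linear homogeneous ODE~(\ref{ODE_for_B_diagonal_2}), writing $\tilde\zeta=\zeta Z$, observing that the undifferentiated-$Z$ terms reproduce the original ODE and hence vanish, and then integrating the resulting first-order equation $Z''/Z' = -2\zeta'/\zeta - \tfrac{1}{2}B'/B + R'/R$ twice. Your additional remarks on the interpretation of $\sigma$ and $\epsilon$ as parametrizing the two-dimensional solution space are a nice gloss not present in the paper's proof.
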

\begin{proof}
The proof is based on the technique of ``reduction in order'' as used in \cite{Petarpa1, Petarpa:thesis}. No new principles are involved and we quickly sketch the argument. Assuming that $\left\{ \zeta(r), B(r), R(r) \right\}$ solves equation (\ref{ODE_for_B_diagonal_2}), write
\begin{equation}
\zeta(r) \to \zeta(r) \; Z(r)\, .
\end{equation}
and demand that $\left\{ \zeta(r) \, Z(r), B(r), R(r) \right\}$ also
solves equation (\ref{ODE_for_B_diagonal_2}).  Then

\begin{equation}
\label{ODE_for_B_diagonal_2_22}
[2  R^2 \, B] \, (\zeta \, Z)''  + [R^2 \, B'-2 \, B \, R \, R' ] (\zeta \, Z)' + [2 B \, (R \, R''-[R']^2) + R \, B' R' +2 ] (\zeta \, Z) = 0.
\end{equation}
We expand the above equation to
\begin{eqnarray}
\label{ODE_for_B_diagonal_2_23}
\nonumber
&&[2 R^2 \, B] (\zeta'' \, Z + 2 \zeta' \, Z' + \zeta \, Z'')  + [R^2 \, B'-2 B \, R \, R' ] (\zeta' \, Z + \zeta \, Z' )
\\
&&
\qquad
+ [2 B \, (R \, R''-[R']^2) + R \, B' \, R' +2 ] (\zeta \, Z) = 0,
\end{eqnarray}
and then re-group to obtain
\begin{eqnarray}
\nonumber
&&\left\{[2 R^2 \, B] \, \zeta'' + [R^2 \, B'-2 \, B \, R \, R' ] \zeta' + [2 B \, (R \, R''-[R']^2) + R \, B' \, R' +2 ]\zeta \right\} Z
\\
&& 
\qquad
[2 R^2 \, B] (2 \zeta' \, Z' + \zeta \, Z'') + [R^2 \, B'-2 B \, R \, R']\zeta \, Z' = 0.
\end{eqnarray}
This is a linear homogeneous 2nd order ODE for $Z$. But under the current hypotheses the entire first line simplifies to zero  --- so the ODE now simplifies to
\begin{eqnarray}
&&[R^2 \, B' \, \zeta + 4 \, R^2 \, B \, \zeta' -2 B \, R \, R' \, \zeta] \, Z' + (2  R^2 \, B \, \zeta) \, Z'' = 0.
\end{eqnarray}
This is a first-order linear ODE in the dependent quantity $Z'(r)$.
Rearrange the above equation into
\begin{eqnarray}
Z'' \over Z' &=& - {[R^2 \, B' \, \zeta + 4 \, R^2 \, B \, \zeta' -2 B \, R \, R' \, \zeta] \over (2  R^2 \, B \, \zeta)}.
\end{eqnarray}
Simplifying
\begin{equation}
\label{ODE_2}
{Z'' \over Z'} = -{1 \over 2} \, {B' \over B} - 2 \, {\zeta' \over \zeta} + {R' \over R}.
\end{equation}
Re-write $Z'' / Z' = \d \ln (Z') / \d r$, and integrate twice over both sides of equation (\ref{ODE_2}), to obtain
\begin{equation}
Z(r) = \sigma + \epsilon \int{{R \over \sqrt{B} \, \zeta^2} \, \d r},
\end{equation}
depending on the old solution $\{\zeta, B, R\}$, and two arbitrary integration constants $\sigma$ and $\epsilon$.

\end{proof}

\subsection{Regularity conditions}
Now consider the result of placing various regularity conditions on the geometry of the perfect fluid sphere. (See particularly the discussion by Delgaty and Lake regarding the importance of regularity conditions~\cite{Delgaty}.) At  a minimum we want the spacetime geometry to be well-defined and smooth everywhere. This requires:
\begin{itemize}
\item 
$\zeta(r) > 0$, so that a clock held at any fixed but arbitrary ``place'' $(r, \theta, \phi)$ is well behaved, and at least continues to ``tick'' (even if it is fast or slow).

\item
Now  the ``centre'' of the spacetime is at some \emph{a priori} arbitrary coordinate $r_0$ such that the very specific condition $R(r_0) = 0$ is satisfied; by a simple coordinate change we can without loss of generality set $r_0 \rightarrow 0$ so that at the centre of the spacetime $R(0) = 0$. 

\item
So purely for geometrical reasons we should demand: 
\begin{itemize}
\item 
$R(0) = 0$;
\item
$\zeta(0) > 0$.
\end{itemize}

\item 
Now consider the spatial three-geometry
\begin{equation}
{\d r^2 \over B(r)} + R(r)^2 \, \d \Omega^2,
\end{equation}
and consider a small ``ball'' centered on the origin with coordinate radius $r$. The surface area of such a ball is
\begin{equation}
 S = 4 \, \pi \, R(r)^2,
\end{equation}
while its proper radius is
\begin{equation}
 \ell = \int_0^r {\d r \over \sqrt{B(r)}}.
\end{equation}
If the geometry at the centre is to be smooth (that is, locally Euclidean) then these two measures of radius must asymptotically agree (for small balls):
\begin{equation}
 \ell = R + \mathcal{O}(R^2).
\end{equation}
Thus in terms of the original coordinate $r$
\begin{equation}
\left.{\partial \ell\over\partial r}\right|_{r=0} = {1\over\sqrt{B(0)}} = \left.{\partial R\over\partial r}\right|_{r=0},
\end{equation}
implying
\begin{equation*}
B(0) \, [R'(0)]^2 = 1, 
\qquad \mathrm{and} \qquad B(0) \neq 0, 
\qquad \mathrm{that \, is} \qquad B(0) = 1/[R'(0)]^2.
\end{equation*}

\item Now if the spacetime is regular at the centre we must avoid ``kinks'' in the metric components.
Specifically, in terms of proper distance from the centre we should have  $\zeta(\ell)=\zeta(0)+  \mathcal{O}(\ell^2)$.
But in view of the already established relation between $\ell$ and $r$ this implies
$\zeta(r)=\zeta(0)+  \mathcal{O}(r^2)$ and we must have
\begin{equation}
\zeta'(0) = 0.
\end{equation}

\item
This does not complete the list of regularity conditions. We can also look at the orthonormal 
components of the Einstein tensor. Consider
\begin{equation}
G_{\hat r \hat r} = - \frac{1- B (R')^2}{R^2} + \frac{2 B \zeta' R' / \zeta}{R}.
\end{equation}
From the conditions already established, applying  the l'Hospital rule to the last term leads to a finite result $2B(0)\zeta''(0)/\zeta(0)$. For the remaining term,
noting that the denominator $(R^2)$ has a double zero at the centre, we deduce that the \emph{derivative} of the numerator must likewise be zero, in order that there be any chance for the l'Hospital rule to deliver a finite answer. 
This now yields the additional constraint that at the centre
\begin{equation}
[B (R')^2]' \to 0,
\end{equation}
that is
\begin{equation}
B'(0) R'(0) + 2 B(0) R''(0) = 0.
\end{equation}
In view of our earlier result for $B(0)$ this can be written as
\begin{equation}
B'(0) = - {2 R''(0) \over [R'(0)]^3} = \left.\left({1\over [R'(r)]^2}\right)'\right|_{r=0}.
\end{equation}

\item 
To check consistency, consider
\begin{equation}
G_{\hat r \hat r} + G_{\hat t \hat t} = \frac{2 B \zeta' R' / \zeta - B' R' - 2 B R''}{R},
\end{equation}
and note that this is finite at the centre provided $\zeta'\to0$ and $[B (R')^2]' \to 0$.

\item 
As a final consistency check consider
\begin{equation}
G_{\hat\theta\hat\theta} = G_{\hat\phi\hat\phi} = 
\frac{B\zeta'R'}{R\zeta} 
+
\frac{ 2  \zeta'' B +  \zeta' B'}{2 \zeta}
+
\frac{2 R'' B +  B' R' }{2R},
\end{equation}
and again note that this is finite at the centre provided $\zeta'\to0$ and $[B (R')^2]' \to 0$.

\item
So now let us  collect all our regularity conditions:
\begin{equation}
R(0) = 0, \qquad \zeta(0) > 0, \qquad B(0) = {1\over R'(0)^2} > 0, \qquad \zeta'(0) = 0 \qquad \mathrm{and} \qquad B'(0) = -{2 R''(0)\over[R'(0)]^3}.
\end{equation}

\item
Finally inserting all these conditions into the $r\to0$ limit we quickly find that at the centre of the fluid sphere the Einstein tensor has the limit:
\begin{eqnarray}
G_{\hat r \hat r}|_{\, 0} &=& G_{\hat \theta \hat \theta}|_{\, 0}= G_{\hat \phi \hat \phi}|_{\, 0}
= {B''(0) \over 2} + {2 \, \zeta''(0) \over \zeta(0) \, [R'(0)]^2} + {R'''(0) \over [R'(0)]^3} - {3 [R''(0)]^2 \over [R'(0)]^4},
\end{eqnarray}
\begin{equation}
G_{\hat t \hat t}|_{\, 0} = - {3 \over 2} \, \left\{ B''(0) + {2 \, R'''(0) \over [R'(0)]^3} - {6 [R''(0)]^2 \over [R'(0)]^4}\right\},
\end{equation}
and therefore
\begin{equation}
G_{\hat t \hat t}|_{\, 0} + 3 \, G_{\hat r \hat r}|_{\, 0} = \frac{6 \zeta''(0)}{\zeta(0) \, [R'(0)]^2}.
\end{equation}

\item
There is another largely independent way of getting these regularity conditions, either by hand or with the aid of a symbolic manipulation program such as {\sf Maple}. Let us start by assuming that near the origin all metric functions have Taylor series expansions
\begin{equation}
\zeta(r) = \zeta_0 + \zeta_1 \,r + {\zeta_2 \,r^2 \over 2!} + {\zeta_3 \,r^3 \over 3!}  + \mathcal{O}(r^4),
\end{equation}
\begin{equation}
B(r) = B_0 + B_1 \,r + {B_2 r^2\over2!} + {B_3 \,r^3\over3!} + \mathcal{O}(r^4),
\end{equation}
\begin{equation}
R(r) = R_1 \, r + {R_2 \,r^2\over2!} + {R_3 \,r^3 \over3!} + \mathcal{O}(r^4),
\end{equation}
Now use this \emph{ansatz} to calculate $G_{\hat t \hat t}$, $G_{\hat r \hat r}$, and $G_{\hat \theta \hat \theta}$ as Laurent series in $r$. A symbolic manipulation program such as {\sf Maple} will quickly tell you
\begin{equation}
G_{\hat a \hat b} = {A_{\hat a \hat b} \over r^2} + {B_{\hat a \hat b} \over r} + C_{\hat a \hat b} 
+ \mathcal{O}(r).
\end{equation}
Regularity now has the obvious meaning that you want the coefficients $A_{\hat a \hat b}$ and $B_{\hat a \hat b}$ of the $1/r^2$ and $1/r$ pieces to vanish  --- which implies constraints on the Taylor series coefficients $\{\zeta_i, \, B_i, \, R_i\}$. A brief computation should convince one that the pole residues $A_{\hat a \hat b}$ and $B_{\hat a \hat b}$ vanish provided:
\begin{equation}
R_0=0, \qquad \zeta_0>0, \qquad B_0=1/R_1^2 > 0, \qquad \zeta_1=0, \qquad \hbox{and} \qquad
B_1 = -2 R_2/(R_1)^3;
\end{equation}
that is
\begin{equation}
R(0)=0, \qquad \zeta(0) > 0, \qquad B(0) = 1/[R'(0)^2] > 0, \qquad \zeta'(0) = 0, \qquad \mathrm{and} 
\qquad B'(0) = -2 R''(0)/[R'(0)]^3.
\end{equation}
These are the same regularity conditions as we previously found by hand. Imposing these conditions, symbolic manipulation software such as {\sf Maple} can then easily evaluate the limits as you move to the centre of the star. 
\end{itemize}
The situation for general diagonal coordinates is  the most complicated case considered in this article, other cases discussed below are  simpler.

\subsection{Implications}

Note that the regularity conditions do feed back into theorem {\bf General diagonal 2}. In that theorem we were interested in the quantity
\begin{equation}
Z(r) = \sigma + \epsilon \int{{R(r) \over \sqrt{B(r)} \, \zeta(r)^2} \, \d r}.
\end{equation}
In view of the regularity conditions we now know that for a physically reasonable fluid sphere the integrand is finite all the way to the origin so that we can without loss of generality set
\begin{equation}
Z(r) = \sigma + \epsilon \int_0^r {{R(r) \over \sqrt{B(r)} \, \zeta(r)^2} \, \d r}.
\end{equation}
But to ensure regularity we want both $\zeta(0)$ and the new $\zeta(0)\to \zeta(0)\, Z(0)$ to be finite and positive, hence $Z(0)>0$, and therefore $\sigma>0$. This reproduces in a purely mathematical way one of the results of reference~\cite{Petarpa2}, where we used the TOV and boundedness of the central pressure and density to impose a similar constraint.

\subsection{Summary}

Let us now assess what we have done so far: For arbitrary spherically symmetric static spacetimes in general diagonal coordinates we have investigated the perfect fluid constraint, derived two ``solution generating theorems'', and applied regularity conditions to the spacetime. The analysis has so far been rather general, and we still have the freedom to make one further specific coordinate choice to simplify the line element, Einstein tensor, solution generating theorem, and regularity conditions.  Making the extra coordinate choice will sometimes simplify the mathematical structure sufficiently to allow more information to be extracted.

\section{Schwarzschild curvature coordinates} 
The Schwarzchild curvature coordinate system is by far the most popular coordinate system used in the study of perfect fluid spheres, and this coordinate choice (corresponding to $R(r)\to r$)  accounts for approximately 55\% of the research reported on perfect fluid spheres~\cite{Skea}.
The Schwarzschild coordinate system is the basis of the various theorems derived and discussed in
references~\cite{Petarpa1, Petarpa2,  Petarpa:thesis, Petarpa:Greece, Petarpa:MG11}.
The line element of a spherically symmetric spacetime in Schwarzschild curvature coordinates is
\begin{equation}
\d s^2 = - \zeta(r)^2 \; \d t^2 + {\d r^2\over B(r)} + r^2 \d\Omega^2.
\end{equation}
To begin, we calculate
\begin{equation}
G_{\hat r\hat r} = \frac{2 B \zeta' r - \zeta + \zeta B}{r^2 \zeta},
\end{equation}
\begin{equation}
G_{\hat\theta\hat\theta} = G_{\hat\phi\hat\phi} =\frac{1}{2}\; 
\frac{B' \zeta + 2 B \zeta' + 2 r \zeta'' B + r \zeta' B'}{r \zeta}.
\end{equation}
and
\begin{equation}
G_{\hat t \hat t} = -\frac{B' r - 1 + B}{r^2}.
\end{equation}

\subsection{ODEs}

Pressure isotrropy leads to the ODE
\begin{equation}
[r(r\zeta)']B'+[2r^2\zeta''-2(r\zeta)']B + 2\zeta=0.
\end{equation}
This is first-order linear non-homogeneous in $B(r)$.  Solving for
$B(r)$ in terms of $\zeta(r)$ is the basis of the analysis in
reference \cite{Martin1}, and is also the basis for Theorem {\bf 1} in
reference \cite{Petarpa1}. (See also~\cite{Petarpa:thesis} and the reports in~\cite{Petarpa:Greece, Petarpa:MG11}. After rephrasing in terms of the TOV
equation this is also related to Theorem {\bf P2} in reference
\cite{Petarpa2}). If we re-group in terms of $\zeta(r)$ we find
\begin{equation}
2 r^2 \zeta'' + (r^2 B'-2rB) \zeta' +(r B'-2B+2)\zeta=0,
\end{equation}
which is a linear homogeneous second-order ODE. This is the basis of
Theorem {\bf 2} in \cite{Petarpa1}, and after being recast as a Riccati
equation is also the basis of Theorem {\bf P1} in
reference~\cite{Petarpa2}.

\subsection{Regularity conditions}

Now consider the regularity conditions we demand to keep the spacetime regular at the centre.  First we have:
\begin{itemize}
\item
 The ``centre'' of the spacetime is by construction at  $r = 0$.
\item 
We must again demand $\zeta(r) > 0$, so that a clock held at fixed $(r, \theta, \phi)$ is well behaved and at least continues to ``tick'' (even if it is fast or slow).
\item
In particular $\zeta(0) > 0$.

\item
We now consider the spatial geometry
\begin{equation}
{\d r^2 \over B(r)} + r^2 \, \d \Omega^2,
\end{equation}
and slightly modify the discussion (previously developed in general diagonal coordinates) for a small ``ball'' centered on the origin with coordinate radius $r$. The surface area of such a ball is
\begin{equation}
 S = 4 \, \pi \, r^2,
\end{equation}
while its proper radius is
\begin{equation}
\ell = \int_0^r {\d r \over \sqrt{B(r)}}.
\end{equation}
If the geometry at the centre is to be smooth (locally Euclidean) then these two measures of radius must asymptotically agree (for small balls):
\begin{equation}
 \ell = r + \mathcal{O}(r^2).
\end{equation}
But in terms of the original coordinate $r$ this implies
\begin{equation}
B(0) \neq 0, \qquad \mathrm{and} \qquad \sqrt{B(0)} = 1, \qquad \mathrm{that \, \, is} \qquad B(0) = 1.
\end{equation}
Note we now have the much tighter constraint $B(0)=1$.

\item The same argument used for general diagonal coordinates still implies
\begin{equation}
\zeta'(0) = 0.
\end{equation}

\item 
By considering
\begin{equation}
G_{\hat r \hat r} = \frac{2 B \zeta'}{r \zeta} + \frac{B-1}{r^2}
\end{equation}
we deduce that at the centre
\begin{equation}
B' \to 0.
\end{equation}
That is
\begin{equation}
B'(0)  = 0.
\end{equation}

\item 
To check consistency note
\begin{equation}
G_{\hat r \hat r} + G_{\hat t \hat t} = \frac{-B' + 2 B \zeta' / \zeta}{r},
\end{equation}
which is finite at the centre provided $B' \to 0$ and $\zeta'\to 0$.

\item 
For the final consistency check note
\begin{equation}
G_{\hat\theta\hat\theta} = G_{\hat\phi\hat\phi} =
{B'\over2r} + {2B\zeta''+B'\zeta'\over2\zeta} + {B\zeta'\over r \zeta},
\end{equation}
which is finite at the centre provided $B' \to 0$ and $\zeta'\to 0$.

\item So now let's collect all our regularity conditions:
\begin{equation}
\zeta(0) > 0, \qquad B(0) = 1, \qquad \zeta'(0) = 0 \qquad \mathrm{and} \qquad B'(0) = 0.
\end{equation}

\item 
Inserting all these conditions into the limit we quickly find
\begin{equation}
G_{\hat r \hat r}|_{\, 0} = G_{\hat \theta \hat \theta}|_{\, 0} = G_{\hat \phi \hat \phi}|_{\, 0}
= {2 \, \zeta''(0) \over \zeta(0)} + {B''(0) \over 2},
\end{equation}
\begin{equation}
G_{\hat t \hat t}|_{\, 0} = -{3 B''(0) \over 2},
\end{equation}
and
\begin{equation}
G_{\hat t \hat t}|_{\, 0} + 3 \, G_{\hat r \hat r}|_{\, 0} = {6 \, \zeta''(0) \over \zeta(0)}.
\end{equation}

\item
The regularity regularity conditions derived above can again also be extracted in another manner --- by using a Taylor series expansion for the metric components to derive a Laurent series for the Einstein tensor --- using (for example) {\sf Maple},  this will still give the same regularity conditions as we derived by hand. This is a specialization of the method developed in the previous general diagonal coordinates. A little work with {\sf Maple} should convince you that the pole pieces vanish provided:
\begin{equation}
\zeta(0) > 0, \qquad B(0) = 1, \qquad \zeta'(0) = 0 \qquad \mathrm{and} \qquad B'(0) = 0.
\end{equation}

\end{itemize}
The ultimate consistency check on this formalism now results from considering general diagonal coordinates and simply setting $R(r)\to r$. That the results agree under this substitution gives us confidence that formalism is internally consistent.

\section{Isotropic coordinates} 

Isotropic coordinates are again commonly used coordinates when investigating perfect  fluid spheres. This is the second most popular coordinate system, accounting for about 35\% of published research in this field~\cite{Skea}. The spacetime metric for an arbitrary static spherically symmetric spacetime in isotropic coordinates is conveniently given by
\begin{equation} \label{Isotropic_1}
\d s^2 = - \zeta(r)^2 \; \d t^2 
+{1\over\zeta(r)^2\;B(r)^2} \{ \d r^2 + r^2 \d\Omega^2 \}.
\end{equation}
We calculate
\begin{equation}
G_{\hat r\hat r} = - 2 B B' \, \zeta^2/r + (B')^2 \zeta^2 - \zeta'^2 B^2,
\end{equation}
\begin{equation}
G_{\hat\theta\hat\theta} =  G_{\hat\phi\hat\phi} =
- B B' \zeta^2 /r + (B')^2 \, \zeta^2 -  B B'' \zeta^2 + B^2 \zeta'^2,
\end{equation}
and
\begin{equation}
G_{\hat t \hat t} =  2 B^2 \zeta \zeta'' + 4 B^2 \zeta \zeta'/r - 3 B^2 \zeta'^2 - 2 B B' \zeta \zeta' + 2 B B'' \zeta^2 - 3 B'^2 \zeta^2 + 4 B B' \zeta^2 /r .
\end{equation}

\subsection{ODEs}

The pressure isotropy condition leads to the very simple looking ODE~\cite{Rahman}:
\begin{equation} 
\label{ode_for_iso_zeta}
\left({\zeta'\over\zeta}\right)^2 ={B''-B'/r\over2B}.
\end{equation}
There are several ways of improving this. For instance, if we write
$\zeta(r)=\exp(\int g(r) \d r)$ then we have an algebraic equation for
$g(r)$~\cite{Rahman}:
\begin{equation}
g(r)= \pm \sqrt{B''-B'/r\over2B}.
\end{equation}
Conversely, the isotropy condition can be written in terms of $B(r)$ as:
\begin{equation}
\label{second_ode_iso}
B'' - B'/r -2g^2 B = 0.
\end{equation}
There is also an improvement obtained by writing $B(r)= \exp(2 \int
h(r) \d r)$ so that
\begin{equation}
g(r)^2 = 2 h(r)^2 + h'(r) - h(r)/r
\end{equation}
which is the basis of the analysis in reference~\cite{Rahman}.

\subsection{Solution generating theorems}

Let us now develop two transformation theorems appropriate to
isotropic coordinates.
\begin{theorem}[\bf Isotropic 1 --- Buchdahl transformation]
In Isotropic coordinates, if $\{\zeta(r),B(r)\}$ describes a perfect
fluid then so does $\{\zeta(r)^{-1},B(r)\}$. This is the well-known Buchdahl
transformation in disguise. That is, if
\begin{equation} 
\d s^2 = - \zeta(r)^2 \; \d t^2 
+{1\over\zeta(r)^2\;B(r)^2} \{ \d r^2 + r^2 \d\Omega^2 \}
\end{equation}
represents a perfect fluid sphere, then the geometry defined by
\begin{equation}
\d s^2 \to - \frac{1}{\zeta(r)^2} \; \d t^2 
+{\zeta(r)^2\over\;B(r)^2} \{ \d r^2 + r^2 \d\Omega^2 \}
\end{equation}
is also a perfect fluid sphere. Alternatively, the mapping
\begin{equation}
\gth{Iso 1}: \left\{ \zeta, B  \right\} \mapsto \left\{\zeta^{-1} , B \right\}
\end{equation}
takes perfect fluid spheres into perfect fluid spheres, and furthermore is a ``square root of
unity'' in the sense that:
\begin{equation}
\gth{Iso 1} \circ\gth{Iso 1} = I.
\end{equation}
\end{theorem}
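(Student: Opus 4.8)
The plan is to exploit the remarkably simple form of the pressure isotropy condition in these coordinates: equation (\ref{ode_for_iso_zeta}) involves $\zeta(r)$ \emph{only} through the combination $(\zeta'/\zeta)^2 = \left[(\ln\zeta)'\right]^2$, whereas its right-hand side $(B''-B'/r)/(2B)$ depends solely on $B(r)$. This is the single observation that drives everything.

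First I would recall that, by the analysis of the preceding subsection, $\{\zeta,B\}$ describes a perfect fluid sphere in isotropic coordinates if and only if
\begin{equation}
\left(\frac{\zeta'}{\zeta}\right)^2 = \frac{B''-B'/r}{2B}.
\end{equation}
Second, I would set $\tilde\zeta = 1/\zeta$ and note that $\tilde\zeta'/\tilde\zeta = -\,\zeta'/\zeta$, so that $(\tilde\zeta'/\tilde\zeta)^2 = (\zeta'/\zeta)^2$; since $B$ is held fixed, the right-hand side is literally unchanged, and hence $\{\tilde\zeta,B\} = \{\zeta^{-1},B\}$ satisfies exactly the same ODE. Therefore $\{\zeta^{-1},B\}$ is again a perfect fluid sphere, which is the assertion encoded in the map $\gth{Iso 1}$. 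The passage back to the line element quoted in the statement is then purely notational: substituting $\zeta\to\zeta^{-1}$ in (\ref{Isotropic_1}) sends $g_{tt}=-\zeta^2$ to $-\zeta^{-2}$ and $g_{rr}=1/(\zeta^2B^2)$ to $\zeta^2/B^2$, reproducing the displayed transformed metric.

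Third, the ``square root of unity'' property is immediate by composition: $\gth{Iso 1}\circ\gth{Iso 1}$ sends $\{\zeta,B\}\mapsto\{\zeta^{-1},B\}\mapsto\{(\zeta^{-1})^{-1},B\}=\{\zeta,B\}$, so $\gth{Iso 1}\circ\gth{Iso 1}=I$.

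I do not expect any genuine obstacle here --- this is the cleanest of the solution-generating theorems, precisely because the parametrization (\ref{Isotropic_1}), with its explicit $\zeta^2$ and $1/\zeta^2$ factors, has been chosen so that the $\zeta\leftrightarrow\zeta^{-1}$ symmetry of the isotropy ODE becomes manifest. The only points that seem to warrant a sentence of commentary are (i) identifying this elementary map with the classical ``Buchdahl transformation'', which in its original guise is written for a different metric ansatz and so requires one to match conventions before the identification is visible, and (ii) noting that the basic central regularity conditions ($\zeta(0)>0$ and $\zeta'(0)=0$) are preserved under $\zeta\mapsto\zeta^{-1}$, so the transformation is compatible with physicality near the origin.
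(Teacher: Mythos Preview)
Your proposal is correct and is essentially the paper's own proof spelled out in detail: the paper simply writes ``By inspection. \emph{Vide} equation (\ref{ode_for_iso_zeta}),'' and your argument makes explicit precisely that inspection, namely the invariance of $(\zeta'/\zeta)^2$ under $\zeta\mapsto\zeta^{-1}$ with $B$ fixed. The additional remarks on the involution property and on regularity are fine elaborations but not needed for the proof itself.
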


\begin{proof}
By inspection. \emph{Vide} equation (\ref{ode_for_iso_zeta}).
\end{proof}

\begin{theorem}[\bf Isotropic 2]
Let $\{\zeta(r),B(r)\}$ describe a perfect fluid sphere. Define
\begin{equation}
Z =  \left\{\sigma +\epsilon \int {r \d r \over B(r)^2} \right\}.
\end{equation}
Then for all $\sigma$ and $\epsilon$, the geometry defined by holding
$\zeta(r)$ fixed and setting
\begin{equation} \label{Buchdahl}
\d s^2 = - \zeta(r)^2 \; \d t^2 +{1\over \zeta(r)^2 \;B(r)^2 \; Z(r)^2} \{ \d r^2 + r^2 \d\Omega^2 \}
\end{equation}
is also a perfect fluid sphere.
That is, the mapping
\begin{equation}
\gth{Iso 2}(\sigma,\epsilon): \left\{ \zeta, B  \right\} \mapsto \left\{ \zeta, B \; Z(B) \right\}
\end{equation}
takes perfect fluid spheres into perfect fluid spheres.
\end{theorem}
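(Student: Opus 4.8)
The plan is to read the pressure isotropy condition, with $\zeta(r)$ held fixed, as the linear homogeneous second-order ODE (\ref{second_ode_iso}) for $B(r)$: namely $B'' - B'/r - 2g^2 B = 0$, where $g = \zeta'/\zeta$. The key observation is that $\zeta$ enters this ODE only through $g^2$, so ``holding $\zeta$ fixed'' leaves every coefficient of the ODE untouched, and the problem reduces to the standard reduction-of-order step already used for Theorem \textbf{General diagonal 2}: given one solution $B(r)$, seek a second solution of the form $\tilde B(r) = B(r)\,Z(r)$.

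First I would substitute $\tilde B = B\,Z$ into (\ref{second_ode_iso}), expand the derivatives, and collect the terms proportional to $Z$; by hypothesis these sum to $(B'' - B'/r - 2g^2 B)\,Z = 0$, leaving a first-order linear ODE in the single unknown $Z'$,
\[
B\,Z'' + \left(2B' - B/r\right)Z' = 0,
\qquad \text{equivalently} \qquad
{Z''\over Z'} = -2\,{B'\over B} + {1\over r}.
\]
Integrating once yields $Z' = \epsilon\, r/B^2$ with $\epsilon$ an arbitrary constant, and integrating a second time gives $Z(r) = \sigma + \epsilon \int r\,\d r/B(r)^2$, with $\sigma$ a second arbitrary constant --- exactly the object in the statement. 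Finally I would observe that for every $\sigma,\epsilon$ the function $\tilde B = B\,Z = \sigma B + \epsilon\, B\int r\,\d r/B^2$ is a linear combination of two solutions of the linear homogeneous ODE (\ref{second_ode_iso}), hence itself a solution; since $\zeta$ (equivalently $g^2$) has not changed, the pair $\{\zeta, B\,Z\}$ again satisfies pressure isotropy, and translating back to the line element this is precisely the asserted map $\gth{Iso 2}(\sigma,\epsilon)$ sending (\ref{Isotropic_1}) to (\ref{Buchdahl}).

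I do not expect a serious obstacle here: the computation is routine and structurally identical to the proof of Theorem \textbf{General diagonal 2}. The one point that deserves care is the bookkeeping noted above --- verifying that (\ref{second_ode_iso}) really is linear and homogeneous in $B$ with its $\zeta$-dependence entering only through $g^2$, so that ``holding $\zeta$ fixed'' leaves an honest linear homogeneous ODE for $B$ --- together with a remark (paralleling the ``Implications'' discussion for general diagonal coordinates) that physical regularity at the centre will force $\sigma > 0$, so that the transformed metric function $\zeta^2 B^2 Z^2$ stays positive and $B\,Z$ inherits a sensible value at $r = 0$. One could alternatively bypass the reduction-of-order narrative and simply verify by direct substitution that $\left(\zeta'/\zeta\right)^2 = \big[(B\,Z)'' - (B\,Z)'/r\big]/(2B\,Z)$ holds whenever $Z' = \epsilon\, r/B^2$, but the reduction-of-order route makes transparent why $Z$ must take this form.
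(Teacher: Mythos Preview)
Your proposal is correct and follows essentially the same route as the paper's own proof: both use reduction of order on the linear homogeneous ODE (\ref{second_ode_iso}) for $B$, substitute $\tilde B = B\,Z$, cancel the $Z$-proportional terms by hypothesis to obtain $B\,Z'' + (2B' - B/r)\,Z' = 0$, and integrate $Z''/Z' = -2B'/B + 1/r$ twice to recover $Z = \sigma + \epsilon\int r\,\d r/B^2$. Your added remarks on linearity of the ODE in $B$ and on the regularity constraint $\sigma>0$ go slightly beyond the paper's proof itself but match the paper's subsequent ``Implications'' discussion.
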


\begin{proof}
The proof is based on the technique of ``reduction in order'', and is a simple variant on the discussion in~\cite{Petarpa1}. Assuming that $\left\{ \zeta(r),B(r) \right\}$ solves equation (\ref{second_ode_iso}), write
\begin{equation}
B(r) \to B(r) \; Z(r) \, .
\end{equation}
and demand that $\left\{ \zeta(r),B(r) Z(r) \right\}$ also solves
equations (\ref{ode_for_iso_zeta}) and (\ref{second_ode_iso}).  We
find
\begin{equation}
(B \; Z)'' - (B \; Z)'/r -2g^2 (B\; Z) = 0.
\end{equation}
Re-grouping
\begin{equation}
\left\{B'' - B'/r -2g^2 B \right\} Z + (2 B' - B / r) Z' + B Z'' = 0.
\end{equation}
This is a linear homogeneous second-order ODE for $Z$ which now
simplifies [in view of (\ref{second_ode_iso})] to
\begin{equation}
\label{ode_iso_zeta}
(2 B' - B / r) Z' + B Z'' = 0 \, ,
\end{equation}
which is an ordinary homogeneous second-order differential equation,
depending only on $Z'$ and $Z''$. (So it can be viewed as a
first-order homogeneous order differential equation in $Z'$, which is
solvable.)  Separating the unknown variable to one side,
\begin{equation} 
\label{de_for_iso_zetaprime}
\frac{Z''}{Z'}=- 2 \frac{B'}{B} + \frac{1}{r} \, .
\end{equation}
Re-write $Z''/ Z' = \d\ln(Z')/\d r$, and integrate twice over both sides of
equation (\ref{de_for_iso_zetaprime}), to obtain
\begin{equation} \label{eq_for_iso_zeta_1}
Z(r) = \left\{\sigma +\epsilon \int {r \; \d r \over B(r)^2} \right\}.
\end{equation}
depending on the old solution $\left\{ \zeta(r), B(r) \right\}$, and
two arbitrary integration constants $\sigma$ and $\epsilon$.
\end{proof}

\begin{theorem}[\bf Isotropic 3]
  The transformations $\gth{Iso 1}$ and $\gth{Iso 2}$ commute.
\end{theorem}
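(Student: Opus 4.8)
The plan is to verify directly that the two composite maps $\gth{Iso 2}(\sigma,\epsilon)\circ\gth{Iso 1}$ and $\gth{Iso 1}\circ\gth{Iso 2}(\sigma,\epsilon)$ agree when applied to an arbitrary perfect fluid sphere $\{\zeta,B\}$. The crucial observation is that $\gth{Iso 1}$ acts only on $\zeta$ (sending $\zeta\mapsto\zeta^{-1}$ and leaving $B$ untouched), while $\gth{Iso 2}(\sigma,\epsilon)$ acts only on $B$ (sending $B\mapsto B\,Z(B)$ and leaving $\zeta$ untouched). So the two transformations operate on disjoint ``slots'' of the pair $\{\zeta,B\}$ --- provided that the function $Z$ produced by $\gth{Iso 2}$ genuinely depends on $B$ alone and not on $\zeta$. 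Inspecting equation (\ref{eq_for_iso_zeta_1}), namely $Z(r)=\sigma+\epsilon\int r\,\d r/B(r)^2$, confirms exactly this: $Z$ is manufactured purely from $B$, with no reference to $\zeta$ whatsoever.

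First I would compute $\gth{Iso 2}(\sigma,\epsilon)\circ\gth{Iso 1}$: apply $\gth{Iso 1}$ to get $\{\zeta^{-1},B\}$, then apply $\gth{Iso 2}(\sigma,\epsilon)$, which replaces $B$ by $B\,Z$ where $Z=\sigma+\epsilon\int r\,\d r/B^2$ is built from the (unchanged) $B$, yielding $\{\zeta^{-1},\,B\,Z(B)\}$. Next I would compute $\gth{Iso 1}\circ\gth{Iso 2}(\sigma,\epsilon)$: apply $\gth{Iso 2}(\sigma,\epsilon)$ to get $\{\zeta,\,B\,Z(B)\}$ with the same $Z=\sigma+\epsilon\int r\,\d r/B^2$, then apply $\gth{Iso 1}$, which inverts the first slot to give $\{\zeta^{-1},\,B\,Z(B)\}$. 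The two results coincide, so the transformations commute.

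The one point that deserves a sentence of care --- and the only place a naive argument could slip --- is the possibility that iterating $\gth{Iso 2}$, or composing it with something that altered $B$, would feed a modified $B$ back into the integral defining $Z$. Here that danger does not arise, precisely because $\gth{Iso 1}$ fixes $B$; hence the $Z$ appearing in both orders of composition is literally the same function of $r$, with the same integration constants $\sigma,\epsilon$. I would state this explicitly so the reader sees that commutativity is not an accident of notation but a genuine consequence of the ``triangular'' structure of the two maps. No real obstacle is expected; the proof is essentially ``by inspection,'' and I would present it as a short paragraph rather than a displayed calculation, in the same spirit as the proof of Theorem {\bf Isotropic 1}.
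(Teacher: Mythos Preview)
Your proposal is correct and takes essentially the same approach as the paper, whose proof consists solely of the words ``By inspection.'' Your argument is exactly the explicit verification that makes that inspection transparent: $\gth{Iso 1}$ touches only $\zeta$, $\gth{Iso 2}$ touches only $B$, and since $Z$ is built from $B$ alone the two maps act on independent slots and hence commute.
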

\begin{proof}
  By inspection.
\end{proof} 

Note that the fact that these two transformation theorems commute is specific to isotropic coordinates. Such behaviour certainly does not occur in Schwarzschild coordinates or in general diagonal coordinates.

\subsection{Regularity conditions}

We now investigate the regularity conditions for isotropic coordinates. Some details are a little tricky, and have led to confusion in some of the previous published literature.
\begin{itemize}
\item
The ``centre'' of the spacetime is at some coordinate $r_0$; by a simple coordinate change we can set $r_0 \rightarrow 0$ so that at the centre of the spacetime is at $r=0$.

\item 
We still require $\zeta(r) > 0$, so that a clock held at fixed $(r, \theta, \phi)$ is well behaved and at least continues to ``tick'' (even if it is fast or slow). In particular
\begin{equation}
\zeta(0) > 0.
\end{equation}
\item
Now consider the spatial geometry
\begin{equation}
{\d r^2 \over \zeta(r)^2 \, B(r)^2} + {r^2 \over \zeta(r)^2 \, B(r)^2} \, \d \Omega^2 = 
{\left\{
{\d r^2} + {r^2} \, \d \Omega^2\right\}
\over  \zeta(r)^2 \, B(r)^2}
\end{equation}
and consider a small ``ball'' centered on the origin with coordinate radius $r$. The surface area of such a ball is
\begin{equation}
 S = 4 \, \pi \, {r^2 \over \zeta(r)^2 \, B(r)^2},
\end{equation}
while its proper radius is
\begin{equation}
 \ell = \int_0^r {\d r \over \zeta(r) \, B(r)}.
\end{equation}
If the geometry at the centre is to be smooth then these two measures of radius must asymptotically agree (for small balls):
\begin{equation}
\int_0^r {\d r \over \zeta(r) \, B(r)} = {r \over \zeta(r) \, B(r)} + \mathcal{O}(r^2),
\end{equation}
implying
\begin{equation}
{1 \over \zeta(r) \, B(r)} = \left({r \over \zeta(r) \, B(r)}\right)' + \mathcal{O}(r),
\end{equation}
that is
\begin{equation}
{1 \over \zeta(r) \, B(r)} = {1 \over \zeta(r) \, B(r)} + \mathcal{O}(r).
\end{equation}
This implies \emph{no constraint} on $\zeta(r)$ or $B(r)$, beyond the obvious constraint that these quantities remain finite and nonzero. Note in particular that in isotropic coordinates it is \emph{not true} that we need to enforce $B(0) = 1$. (If $B(0)\neq1$ we could think of arbitrarily rescaling it by a constant factor to force it to be unity, this would at worst multiply all curvature invariants by a constant, and so a nonsingular geometry remains nonsingular under such rescaling. This point is commonly misunderstood in the literature.)

\item
On the other hand, we could also find all our regularity conditions for isotropic coordinates by deducing them from the analysis developed for general diagonal coordinates. Identifying coefficients in the two metrics we see
\begin{equation}
\zeta_{\, iso} = \zeta_{\, gen}; \qquad\qquad (\zeta \, B)^2_{\, iso} = B_{\, gen},
\end{equation}
and
\begin{equation}
\label{R_gen}
{r \over (\zeta \, B)_{\, iso}} = R_{\, gen}.
\end{equation}
Now consider
\begin{equation}
[B (R')^2]_{gen} = (\zeta \, B)^2_{\, iso}  \left[ \left( {r \over (\zeta \, B)_{\, iso}} \right)'\right]^2 = 
\left[ 1 - {r\zeta'\over\zeta}-{r B'\over B} \right]_{iso}^2
\end{equation}
So at the centre $[B (R')^2]_{gen} \to 1$ provides no extra information about $B_{iso}$ or $\zeta_{iso}$ (beyond the fact that these quantities are finite and nonzero).

\item The same argument used for general diagonal coordinates and Schwarzschild coordinates still implies
\begin{equation}
\zeta'(0) = 0.
\end{equation}

\item
This does not complete the list of regularity conditions. Consider
\begin{equation}
G_{\hat r \hat r} =  -\frac{2B B' \zeta^2}{r}  + B'^2 \zeta^2  - \zeta'^2 B^2.
\end{equation}
Finiteness of this quantity now implies
\begin{equation}
B'(0) = 0
\end{equation}

\item 
On the other hand, finiteness of 
\begin{equation}
G_{\hat\theta \hat\theta} = G_{\hat\phi \hat\phi} = 
\frac{-B B' \zeta^2}{r} + B'^2 \zeta^2  - B B'' \zeta^2 + B^2 \zeta'^2,
\end{equation}
implies no additional constraint.

\item 
Similarly,  finiteness of 
\begin{equation}
G_{\hat t \hat t} =  2 B^2 \zeta \zeta'' + {4 B^2 \zeta \zeta'\over r} - 3 B^2 \zeta'^2 - 2 B B' \zeta \zeta' + 2 B B'' \zeta^2 - 3 B'^2 \zeta^2 + {4 B B' \zeta^2 \over r} ,
\end{equation}
implies no additional constraint.

\item
Summarizing, our regularity conditions are
\begin{equation}
B(0)>0; \qquad B'(0)=0; \qquad \zeta(0) >0; \qquad \zeta'(0)=0.
\end{equation}

\item
Now inserting the above conditions into the central limit we quickly find
\begin{eqnarray}
G_{\hat r \hat r}|_{\, 0} &=& G_{\hat \theta \hat \theta}|_{\, 0} =  G_{\hat \phi \hat \phi}|_{\, 0}
= 
- 2 B(0) B''(0) \zeta(0)^2,
\end{eqnarray}
\begin{equation}
G_{\hat t \hat t}|_{\, 0} = 6 B(0) B''(0) \zeta(0)^2 + 6 B(0)^2 \zeta(0) \zeta''(0),
\end{equation}
and
\begin{eqnarray}
\nonumber
G_{\hat t \hat t}|_{\, 0} + 3 \, G_{\hat r \hat r}|_{\, 0} 
&=& 6 B(0)^2 \zeta(0) \zeta''(0).
\end{eqnarray}



\item
On the other hand, the regularity conditions can also be easily obtained by using Taylor series for  the metric components to compute Laurent series for the orthonormal components of the Einstein tensor. A brief computation  with {\sf Maple} should convince you that the pole pieces vanish provided:
\begin{equation*}
\zeta(0) > 0, \qquad B(0) > 0, \qquad \zeta'(0) = 0 \qquad \mathrm{and} \qquad B'(0) = 0.
\end{equation*}
\end{itemize}

\subsection{Implications}

Note that the regularity conditions do feed back into theorem {\bf Isotropic 2}. In that theorem we were interested in the quantity
\begin{equation}
Z(r) = \sigma + \epsilon \int{{r \, \d r \over B(r)^2} }.
\end{equation}
In view of the regularity conditions we now know that for a physically reasonable fluid sphere the integrand is finite all the way to the origin so that we can without loss of generality set
\begin{equation}
Z(r) = \sigma + \epsilon \int_0^r {r \, \d r \over B(r)^2}.
\end{equation}
But to ensure regularity we want both $B(0)$ and the new $B(0)\to B(0)\, Z(0)$ to be finite and positive, hence $Z(0)>0$, and therefore $\sigma>0$.

\section{Gaussian polar coordinates}

Having now dealt with the most common coordinate systems, \emph{viz.}
Schwarzschild curvature coordinates and isotropic coordinates, we will
spend some time studying other more unusual coordinate systems.
Although Gaussian polar coodinates are typically less convenient than Schwarzschild or isotropic coordinates, we have however found them to be occasionally useful in the theory of symmetric static perfect fluid spheres.
Consider, for instance, the metric in Gaussian polar (proper radius)
coordinates:
\begin{equation}
\d s^2 = - \zeta(r)^2 \; \d t^2 +  \d r^2 + R(r)^2 \d\Omega^2.
\end{equation}
The following expressions can easily be calculated
\begin{equation}
G_{\hat r\hat r} = \frac{\zeta R'^2 - \zeta + 2 \zeta' R R'}{R^2 \zeta},
\end{equation}
\begin{equation}
G_{\hat\theta\hat\theta} = G_{\hat\phi\hat\phi} =\frac{\zeta R'' + \zeta' R' + \zeta'' R}{R \zeta},
\end{equation}
and
\begin{equation}
G_{\hat t \hat t} = \frac{ - R'^2 - 2 R R'' + 1}{R^2}.
\end{equation}
The isotropy condition supplies a second-order linear homogeneous ODE
for $\zeta(r)$:
\begin{equation} \label{second_ode_polar}
\zeta'' - \zeta' \; {R' \over R} + 
\zeta  \; \left\{ {1-(R')^2+ R''R\over R^2 }\right\} = 0.
\end{equation}
In contrast, when rearranged and viewed as an ODE for $R(r)$, 
\begin{equation}
\zeta R R'' - \zeta' R R' - \zeta (R')^2 + \zeta'' R^2 + \zeta=0,
\end{equation}
this has no particularly desirable
properties.

\subsection{Solution generating theorem}

\begin{theorem}[\bf Gaussian polar]
  Suppose we are in Gaussian polar coordinates, and that $\{
  \zeta(r), R(r) \}$ represents a perfect fluid sphere.  Define
\begin{equation}
\Lambda(r) =  \left\{\sigma +\epsilon \int {R(r) \d r \over \zeta(r)^2} \right\}.
\end{equation}
Then for all $\sigma$ and $\epsilon$, the geometry defined by holding
$R(r)$ fixed and setting
\begin{equation} 
\d s^2 = - \zeta(r)^2 \Lambda(r)^2\; \d t^2 +  \d r^2 + R(r)^2 \d\Omega^2 
\end{equation}
is also a perfect fluid sphere.
That is, the mapping
\begin{equation}
\gth{Gaussian}(\sigma,\epsilon):
\left\{ \zeta, R \right\} \mapsto \left\{\zeta \; \Lambda(\zeta, R) ,R \right\}
\end{equation}
takes perfect fluid spheres into perfect fluid spheres.
\end{theorem}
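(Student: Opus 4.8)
The plan is to use the ``reduction of order'' technique, exactly as in the proofs of Theorem \textbf{Isotropic 2} and Theorem \textbf{General diagonal 2}. The key observation is that the pressure isotropy condition in Gaussian polar coordinates, equation (\ref{second_ode_polar}), is a linear homogeneous second-order ODE in $\zeta(r)$, with $R(r)$ entering only as a fixed coefficient. Since $\{\zeta,R\}$ is assumed to be a perfect fluid sphere, $\zeta$ is one solution of this ODE; the strategy is to seek a ``second'' solution of the form $\zeta(r)\,\Lambda(r)$ and to show that the admissible $\Lambda$ are precisely those displayed in the theorem, thereby exhibiting the claimed map.

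First I would substitute $\zeta \to \zeta\,\Lambda$ into (\ref{second_ode_polar}), using $(\zeta\Lambda)' = \zeta'\Lambda + \zeta\Lambda'$ and $(\zeta\Lambda)'' = \zeta''\Lambda + 2\zeta'\Lambda' + \zeta\Lambda''$, and then re-group the result according to how many derivatives fall on $\Lambda$. The coefficient of the undifferentiated $\Lambda$ is exactly the left-hand side of (\ref{second_ode_polar}) evaluated on the original $\zeta$, hence vanishes by hypothesis. What survives is
\begin{equation}
\zeta\,\Lambda'' + \left(2\,\zeta' - \zeta\,{R'\over R}\right)\Lambda' = 0,
\end{equation}
a first-order linear homogeneous ODE in the single unknown $\Lambda'(r)$. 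Dividing by $\zeta\,\Lambda'$ and recognizing $\Lambda''/\Lambda' = \d\ln(\Lambda')/\d r$, one integration gives $\Lambda'(r) = \epsilon\, R(r)/\zeta(r)^2$ for an arbitrary constant $\epsilon$, and a second integration gives $\Lambda(r) = \sigma + \epsilon \int R(r)\,\d r/\zeta(r)^2$ with a second arbitrary constant $\sigma$ --- exactly the $\Lambda$ asserted in the statement. Since $R(r)$ is held fixed while $\zeta\to\zeta\,\Lambda$, the line element transforms as claimed, and the verification that $\{\zeta\,\Lambda,R\}$ again solves (\ref{second_ode_polar}) is precisely the computation just performed, run in reverse.

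There is no genuine conceptual obstacle: the argument is a direct transcription of the reduction-of-order proofs already given above, and ``no new principles are involved''. The only step demanding care is the bookkeeping in the expansion --- one must check that, after substituting $\zeta\to\zeta\,\Lambda$ and regrouping, the coefficient of $\Lambda$ reproduces the \emph{full} original operator with no leftover terms, so that the perfect-fluid hypothesis can legitimately be invoked to annihilate it. As a closing remark, and in parallel with the ``Implications'' subsections for the general diagonal and isotropic cases, one may note that once regularity conditions for Gaussian polar coordinates are imposed the integrand $R/\zeta^2$ is finite down to the origin, so the lower limit of the integral defining $\Lambda$ can be taken to be $0$, and requiring the transformed $\zeta(0)$ to remain positive forces $\sigma>0$.
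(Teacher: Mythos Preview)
Your proposal is correct and follows essentially the same route as the paper's proof: reduction of order applied to the linear second-order ODE (\ref{second_ode_polar}), substitution $\zeta\to\zeta\,\Lambda$, cancellation of the $\Lambda$-coefficient by the perfect-fluid hypothesis, and two integrations of the resulting first-order ODE in $\Lambda'$ to obtain $\Lambda=\sigma+\epsilon\int R\,\d r/\zeta^2$. Your closing remark on regularity and the constraint $\sigma>0$ also anticipates precisely the content of the paper's subsequent ``Implications'' subsection.
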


\begin{proof}
  The proof is again based on the technique of ``reduction in order''.
  No new principles are involved and the argument is quickly sketched.
  Assuming that $\left\{ \zeta(r), R(r) \right\}$ solves equation
  (\ref{second_ode_polar}), write
\begin{equation}
\zeta(r) \to \zeta(r) \; \Lambda(r),
\end{equation}
and demand that $\left\{ \zeta(r) \, \Lambda(r), R(r) \right\}$ also
solves equation (\ref{second_ode_polar}).  Then
\begin{equation}
(\zeta \; \Lambda)'' - (\zeta \; \Lambda)' {R' \over R} 
+ (\zeta \; \Lambda) \left\{ {1-(R')^2+ R''R\over R^2 }\right\} = 0,
\end{equation}
which  simplifies to
\begin{equation}
\label{ode_polar_zeta}
\left(2 \zeta' -  \zeta \; {R' \over R}\right) \Lambda' + \zeta \; \Lambda'' = 0.
\end{equation}
This is a 1st-order homogeneous order differential equation in
$\Lambda'$, which is solvable.  Separating the unknown variable, and
integrating twice one obtains
\begin{equation} \label{eq_for_polar_zeta_1}
\Lambda(r) = \left\{\sigma +\epsilon \int {R(r) \; \d r \over \zeta(r)^2} \right\}.
\end{equation}
depending on the old solution $\left\{ \zeta(r) , R(r) \right\}$, and
two arbitrary integration constants $\sigma$ and $\epsilon$.
\end{proof}

\subsection{Regularity conditions}

In the following section, we briefly describe the regularity conditions that must hold at the centre of the fluid sphere
\begin{itemize}
\item 
Adapting the arguments that by now should be completely standard we have
\begin{equation}
R(0)=0; \qquad \zeta(0) > 0; \qquad \zeta'(0)=0.
\end{equation}

\item
Now consider the spatial geometry
\begin{equation}
\d r^2 + R(r)^2 \, \d \Omega^2,
\end{equation}
and consider a small ``ball'' centered on the origin with coordinate radius $r$. The surface area of such a ball is
\begin{equation}
 S = 4 \, \pi \, R(r)^2,
\end{equation}
while its proper radius is
\begin{equation}
 \ell = \int_0^r 1 \, \d r = r .
\end{equation}
If the geometry at the centre is to be smooth, these two measures of radius must asymptotically agree (for small balls):
\begin{equation}
 r = R + \mathcal{O}(R^2).
\end{equation}
But in terms of the original coordinate $r$ this implies
\begin{equation}
R'(0) = 1.
\end{equation}

\item
Now consider 
\begin{equation}
G_{\hat r \hat r} = \frac{R'^2 -1}{R^2} + \frac{2 \zeta' R' / \zeta}{R}
\end{equation}
The last term is finite by the l'Hospital rule and the previously derived condition $\zeta'\to 0$. Applying the l'Hospital rule twice the first term will be finite at the centre iff
\begin{equation}
[(R')^2]' \to 0,
\end{equation}
that is, iff
\begin{equation}
R''(0)  = 0.
\end{equation}

\item
For a consistency check, note that 
\begin{equation}
G_{\hat r \hat r} + G_{\hat t \hat t} = \frac{2 \zeta' R' / \zeta - 2 R''}{R},
\end{equation}
which implies no new constraint.

\item
For a consistency check, note that 
\begin{equation}
G_{\hat \theta \hat \theta}  = G_{\hat \phi \hat \phi}  = 
{R''\over R} +{R'\zeta'\over R\zeta} + {\zeta''\over \zeta},
\end{equation}
which implies no new constraint.

\item
Now let us collect all our regularity conditions:
\begin{equation}
R(0) = 0,  \qquad R'(0) = 1, \qquad R''(0) = 0, \qquad 
\zeta(0) > 0, \qquad \mathrm{and} \qquad  \zeta'(0) = 0 .
\end{equation}

\item
Inserting all these conditions into the limit we quickly deduce
\begin{equation}
G_{\hat r \hat r}|_{\, 0} = G_{\hat \theta \hat \theta}|_{\, 0} = G_{\hat\phi \hat \phi}|_{\, 0}
=
R'''(0) + {2 \zeta''(0) \over \zeta(0)},
\end{equation}
\begin{equation}
G_{\hat t \hat t}|_{\, 0} = - 3 R'''(0),
\end{equation}
and
\begin{equation}
G_{\hat t \hat t}|_{\, 0} + 3 \, G_{\hat r \hat r}|_{\, 0} = {6 \, \zeta''(0) \over \zeta(0)}.
\end{equation}

\item
These regularity conditions can again be verified by using Taylor series in the metric components to calculate Laurent series for the orthonormal components of the Einstein tensor. Using {\sf Maple}, the pole pieces vanish provided:
\begin{equation}
\zeta(0) > 0, \qquad  \zeta'(0) = 0, \qquad R(0) = 0, \qquad R'(0) = 1,  
\qquad \mathrm{and} \qquad R''(0) = 0.
\end{equation}

\end{itemize}

\subsection{Implications}

Note that the regularity conditions do feed back into theorem {\bf Gaussian polar}. In that theorem we were interested in the quantity
\begin{equation}
Z(r) = \sigma + \epsilon \int{{R(r) \, \d r \over \zeta(r)^2} }.
\end{equation}
In view of the regularity conditions we now know that for a physically reasonable fluid sphere the integrand is finite all the way to the origin so that we can without loss of generality set
\begin{equation}
Z(r) = \sigma + \epsilon \int_0^r {R(r) \, \d r \over \zeta(r)^2}.
\end{equation}
But to ensure regularity we want both $\zeta(0)$ and the new $\zeta(0)\to \zeta(0)\, Z(0)$ to be finite and positive, hence $Z(0)>0$, and therefore $\sigma>0$.

\section{Synge isothermal coordinates}
Consider the metric in Synge isothermal (tortoise) coordinates:
\begin{equation}
\d s^2 = - \zeta(r)^{-2} \; \{ \d t^2 - \d r^2\} + \{\zeta(r)^{-2} \;R(r)^2 \d\Omega^2 \}.
\end{equation}
A brief computation yields
\begin{equation}
G_{\hat r\hat r} =  \frac{3  R^2 \; (\zeta')^2 - 4 R \zeta  R' \; \zeta' + (R')^2 \zeta^2 - \zeta^2}{R^2},
\end{equation}
\begin{equation}
G_{\hat\theta\hat\theta} =   \frac{3 R (\zeta')^2 - 2 \zeta R' \zeta' - 2 R \; \zeta \zeta'' + \zeta^2 R''}{R},
\end{equation}
and
\begin{equation}
G_{\hat t \hat t} =  \frac{-3 R^2 \zeta'^2 + 2 R^2 \zeta \zeta'' + 4 R R' \zeta \zeta' + \zeta^2 - R'^2 \zeta^2 - 2 R R'' \zeta^2}{R^2}.
\end{equation}
Pressure isotropy supplies us with an second-order homogeneous ODE for
$\zeta(r)$:
\begin{equation} \label{second_order_synge}
\zeta''-\zeta' {R'\over R} - \zeta \left\{ {1-(R')^2+R R''\over 2 R^2}  \right\} = 0.
\end{equation}
This is very similar to the isotropy equation in Gaussian polar coordinates.
On the other hand, rearranging to obtain an ODE for $R(r)$ yields a second order nonlinear differntial equation of no particularly useful form.

\subsection{Solution generating theorem}

Our first key result can be phrased as a simple theorem:
\begin{theorem}[\bf Synge]
Suppose $\{ \zeta(r), R(r) \}$ represents a perfect fluid sphere.
Define
\begin{equation}
A(r) =  \left\{\sigma +\epsilon \int {R(r) \d r \over \zeta(r)^2} \right\}.
\end{equation}
Then for all $\sigma$ and $\epsilon$, the geometry defined by holding
$R(r)$ fixed and setting
\begin{equation} 
\d s^2 = - \frac{1}{\zeta(r)^2 \; A(r)^2}\; \left\{\d t^2 - \d r^2\right\} 
+ \frac{R(r)^2}{\zeta(r)^2 \; A(r)^2} \d\Omega^2 
\end{equation}
is also a perfect fluid sphere. 
That is, the mapping
\begin{equation}
\gth{Synge}(\sigma,\epsilon): 
\left\{ \zeta, R \right\} \mapsto \left\{\zeta \; A(\zeta, R) ,R \right\}
\end{equation}
takes perfect fluid spheres into perfect fluid spheres.
\end{theorem}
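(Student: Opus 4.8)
The plan is to use exactly the ``reduction of order'' strategy already deployed for the Gaussian polar theorem, since the structure is identical. The crucial observation is that in Synge isothermal coordinates the pressure isotropy condition is completely encoded in the single second-order linear homogeneous ODE (\ref{second_order_synge}) for $\zeta(r)$ with $R(r)$ playing the role of a fixed coefficient. Under the proposed map the line element is cast in the form $\d s^2 = -\tilde\zeta^{-2}\{\d t^2 - \d r^2\} + \tilde\zeta^{-2}\tilde R^2\,\d\Omega^2$ with $\tilde\zeta = \zeta\,A$ and $\tilde R = R$, so ``holding $R$ fixed and sending $\zeta\to\zeta\,A$'' is precisely the substitution that reduction of order calls for. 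Thus it suffices to show that if $\zeta$ solves (\ref{second_order_synge}) then $\zeta\,A$ also solves (\ref{second_order_synge}) exactly when $A$ has the stated form.

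First I would substitute $\zeta\to\zeta\,A$ into (\ref{second_order_synge}) and expand with the Leibniz rule, $(\zeta A)'' = \zeta'' A + 2\zeta' A' + \zeta A''$ and $(\zeta A)' = \zeta' A + \zeta A'$. Collecting the terms proportional to $A$ gives
\begin{equation}
\left\{\zeta'' - \zeta'\,{R'\over R} - \zeta\left[{1-(R')^2+RR''\over 2R^2}\right]\right\}A
+ \left(2\zeta' - \zeta\,{R'\over R}\right)A' + \zeta\,A'' = 0,
\end{equation}
and the brace vanishes identically because $\{\zeta,R\}$ is by hypothesis a perfect fluid sphere. What survives is the first-order linear homogeneous ODE $\left(2\zeta' - \zeta R'/R\right)A' + \zeta A'' = 0$ in the unknown $A'$, which is the direct analogue of (\ref{ode_polar_zeta}).

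Next I would separate variables, writing $A''/A' = \d\ln(A')/\d r = -2\,\zeta'/\zeta + R'/R$, integrate once to get $A' = \epsilon\,R/\zeta^2$, and integrate a second time to obtain $A(r) = \sigma + \epsilon\int R(r)\,\d r/\zeta(r)^2$ with two integration constants $\sigma,\epsilon$. Feeding this back into the line element reproduces the metric displayed in the theorem, completing the argument; note that this $A(r)$ coincides with the $\Lambda(r)$ of the Gaussian polar theorem, as it must, since the reduced ODEs are literally the same.

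There is essentially no serious obstacle here: the only points requiring a moment of care are (i) confirming the coordinate identification so that the map really is $\zeta\mapsto\zeta A$ with $R$ inert, and (ii) relying on the fact (asserted above equation (\ref{second_order_synge})) that isotropy in these coordinates is equivalent to (\ref{second_order_synge}) alone, so that verifying the single ODE suffices. The algebraic expansion and the two integrations are routine.
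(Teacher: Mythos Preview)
Your proposal is correct and follows essentially the same reduction-of-order argument as the paper: substitute $\zeta\to\zeta A$ into (\ref{second_order_synge}), use the hypothesis to kill the term proportional to $A$, obtain the first-order ODE $(2\zeta'-\zeta R'/R)A'+\zeta A''=0$ for $A'$, and integrate twice. If anything you have supplied more detail than the paper's sketch, including the explicit Leibniz expansion and the separation $A''/A'=-2\zeta'/\zeta+R'/R$.
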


\begin{proof}
  The proof is again based on the technique of ``reduction in order''.
  No new principles are involved and the argument is very quickly
  sketched.  Assuming that $\left\{ \zeta(r), R(r) \right\}$ solves
  equation (\ref{second_order_synge}), write
\begin{equation}
\zeta(r) \to \zeta(r) \; A(r) \, .
\end{equation}
and demand that $\left\{ \zeta(r)\, A(r), R(r) \right\}$ also solves
equation (\ref{second_order_synge}).  We eventually find
\begin{equation}
\label{ode_polar_zeta2}
\left(2 \zeta' -  \zeta \; {R' \over R}\right) A' + \zeta \; A'' = 0 \, ,
\end{equation}
which is solvable. Separating the unknown variable to one side, and
integrating twice we obtain
\begin{equation} 
\label{eq_for_synge_zeta_1}
A(r) = \left\{\sigma +\epsilon \int {R(r) \; \d r \over \zeta(r)^2} \right\}.
\end{equation}
depending on the old solution $\left\{ \zeta(r) , R(r) \right\}$, and
two arbitrary integration constants $\sigma$ and $\epsilon$.
\end{proof}

\subsection{Regularity conditions}

\begin{itemize}
\item
As usual
\begin{equation}
R(0)=0; \qquad \zeta(0)>0; \qquad \zeta'(0)=0.
\end{equation}

\item 
Now consider the spatial geometry
\begin{equation}
{\d r^2 \over \zeta(r)^2} + {R(r)^2 \over \zeta(r)^2} \, \d \Omega^2 = {\d r^2  + R(r)^2\, \d \Omega^2  \over \zeta(r)^2} 
\end{equation}
and consider a small ``ball'' centered on the origin with coordinate radius $r$. The surface area of such a ball is
\begin{equation}
 S = 4 \, \pi \, {R(r)^2 \over \zeta(r)^2},
\end{equation}
while its proper radius is
\begin{equation}
 \ell = \int_0^r {1 \over \zeta(r)} \, \d r .
\end{equation}
If the geometry at the centre is to be smooth then these two measures of radius must asymptotically agree (for small balls):
\begin{equation}
 \ell = {R \over \zeta} + \mathcal{O}[(R/\zeta)^2].
\end{equation}
But in terms of the original coordinate $r$ this implies
\begin{equation}
R'(0) = 1.
\end{equation}
Note that there is no additional constraint on $\zeta(0)$ apart from the fact that it is nonzero and finite.

\item
Consider
\begin{equation}
G_{\hat r \hat r} = 3 R'^2 - \frac{4 \zeta \zeta' R' }{R} + \frac{R'^2 \zeta^2 - \zeta^2}{R^2}
\end{equation}
Regularity at the centre requires
\begin{equation}
[(R')^2]' \to 0,
\end{equation}
which implies
\begin{equation}
R''(0)=0.
\end{equation}

\item 
The combination
\begin{equation}
G_{\hat r \hat r} + G_{\hat t \hat t} = 2\zeta \zeta'' - \frac{2R'' \zeta^2}{R},
\end{equation}
does not lead to any new constraint.

\item
The component
\begin{equation}
G_{\hat \theta \hat \theta}  = G_{\hat \phi \hat \phi}  = 
3(\zeta')^2 - 2\zeta \zeta'' + \frac{2 R'' \zeta^2-2\zeta R' \zeta' }{R}
\end{equation}
does not lead to any new constraint.

\item
Collecting all our regularity conditions:
\begin{equation}
\zeta(0) > 0, \qquad \zeta'(0) = 0, \qquad R(0) = 0, \qquad R'(0) = 1, 
\qquad  \mathrm{and} \qquad R''(0) = 0.
\end{equation}

\item
Finally inserting all these conditions into the limit we quickly find
\begin{equation}
G_{\hat r \hat r}|_{\, 0} = G_{\hat \theta \hat \theta}|_{\, 0} = G_{\hat \phi \hat \phi}|_{\, 0}
=- 4 \, \zeta(0) \zeta''(0) + \zeta(0)^2 R'''(0),
\end{equation}
\begin{equation}
G_{\hat t \hat t}|_{\, 0} = 6 \, \zeta(0) \zeta''(0) - 3 \, \zeta(0)^2 R'''(0),
\end{equation}
and
\begin{equation}
G_{\hat t \hat t}|_{\, 0} + 3 \, G_{\hat r \hat r}|_{\, 0} = -6 \, \zeta(0) \zeta''(0).
\end{equation}

\item
Again working with Taylor series to derive Laurent series,  the pole pieces vanish and the centre of the sphere is regular provided:
\begin{equation}
\zeta(0) > 0, \qquad  \zeta'(0) = 0, \qquad 
R(0) = 0, \qquad  R'(0) = 1, \qquad \mathrm{and} \qquad R''(0) = 0.
\end{equation}

\end{itemize}

\subsection{Implications}

Note that the regularity conditions do feed back into theorem {\bf Synge}. In that theorem we were interested in the quantity
\begin{equation}
A(r) = \sigma + \epsilon \int{{R(r) \, \d r \over \zeta(r)^2} }.
\end{equation}
In view of the regularity conditions we now know that for a physically reasonable fluid sphere the integrand is finite all the way to the origin so that we can without loss of generality set
\begin{equation}
A(r) = \sigma + \epsilon \int_0^r {R(r) \, \d r \over \zeta(r)^2}.
\end{equation}
But to ensure regularity we want both $\zeta(0)$ and the new $\zeta(0)\to \zeta(0)\, Z(0)$ to be finite and positive, hence $Z(0)>0$, and therefore $\sigma>0$.


\section{Buchdahl coordinates}
Without loss of generality we can put the metric in ``Buchdahl coordinates'' and choose the coefficients to be
\begin{equation}
\d s^2 = - \zeta(r)^{2} \; \d t^2 + \zeta(r)^{-2} \left\{ {\d r^2 } +  R(r)^2 \d\Omega^2 \right\}.
\end{equation}
This coordinate system is a sort of cross between Synge isothermal
(tortoise) coordinates and Gaussian polar (proper radius) coordinates.
We calculate
\begin{equation}
G_{\hat r\hat r} =  -(\zeta')^2 - {\zeta^2\;[1-(R')^2]\over R^2},
\end{equation}
\begin{equation}
G_{\hat\theta\hat\theta} =  G_{\hat\phi\hat\phi} = +(\zeta')^2 + {\zeta^2\;R''\over R} ,
\end{equation}
and
\begin{equation}
G_{\hat t \hat t} =  -3(\zeta')^2+2\zeta\zeta''- {2\zeta( \zeta R''-   2 R' \zeta' )\over R} 
+ {\zeta^2\;[1-(R')^2]\over R^2} .
\end{equation}
Imposing pressure isotropy supplies us with a particularly simple first-order homogeneous
ODE for $\zeta(r)$:
\begin{equation} 
\label{ode_for_Buchdahl}
\left({\zeta'\over\zeta}\right)^2 =-{[1-(R')^2+R R'']\over 2 R^2}.
\end{equation}
This is very similar to the equation we obtained in isotropic coordinates. (Rearranging this into an ODE for $R(r)$ yields a second-order nonlinear differential equation which does not seem to be particularly useful.)

\begin{theorem}[\bf Buchdahl]
  If $\{\zeta(r),R(r)\}$ describes a perfect fluid then so does
  $\{\zeta(r)^{-1},R(r)\}$. This is the Buchdahl transformation in yet
  another disguise.  The geometry defined by holding $R(r)$ fixed and
  setting
\begin{equation}
\d s^2 = - \zeta(r)^{-2} \; \d t^2 + {\zeta(r)^2}  \left\{ \d r^2 
+ {R(r)^2} \d\Omega^2 \right\}
\end{equation}
is also a perfect fluid sphere. 
That is, the mapping
\begin{equation}
\gth{Buchdahl}: \left\{ \zeta, R  \right\} \mapsto \left\{\zeta^{-1} , R \right\}
\end{equation}
takes perfect fluid spheres into perfect fluid spheres.
\end{theorem}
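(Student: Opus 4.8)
The plan is to proceed ``by inspection'', exactly as in the proof of Theorem {\bf Isotropic 1}, exploiting the fact that the pressure isotropy condition (\ref{ode_for_Buchdahl}) depends on the metric function $\zeta(r)$ only through the combination $(\zeta'/\zeta)^2$. First I would rewrite the left-hand side of (\ref{ode_for_Buchdahl}) as $(\d\ln\zeta/\d r)^2$, making manifest that it involves $\zeta$ only via its logarithmic derivative, while the right-hand side $-[1-(R')^2+R R'']/(2R^2)$ involves $R(r)$ alone.

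Next I would observe that under the replacement $\zeta \to \zeta^{-1}$ one has $\ln\zeta \to -\ln\zeta$, hence $\d\ln\zeta/\d r \to -\,\d\ln\zeta/\d r$, so that $(\d\ln\zeta/\d r)^2$ is left invariant. Since the theorem holds $R(r)$ fixed, the right-hand side of (\ref{ode_for_Buchdahl}) is also unchanged. Therefore $\{\zeta^{-1},R\}$ satisfies the very same isotropy ODE as $\{\zeta,R\}$, and — since (\ref{ode_for_Buchdahl}) is precisely the statement $G_{\hat r\hat r}=G_{\hat\theta\hat\theta}$ for the Buchdahl-coordinate line element — the transformed geometry is again a perfect fluid sphere. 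As a closing remark I would note that $\gth{Buchdahl}\circ\gth{Buchdahl}=I$, so this map is once more a ``square root of unity'', and that (as in isotropic coordinates) one may also check the invariance of $G_{\hat r\hat r}-G_{\hat\theta\hat\theta}$ directly from the listed Einstein tensor components.

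There is essentially no hard step here: the whole content is the algebraic observation that $(\zeta'/\zeta)^2$ is insensitive to $\zeta\mapsto1/\zeta$. The only points requiring a moment's care are (i) that $\zeta^{-1}$ remains positive and finite wherever $\zeta$ was, which is immediate from $\zeta>0$, and (ii) that the transformed line element is still of genuine Buchdahl-coordinate form with the \emph{same} $R(r)$, so that re-imposing $G_{\hat r\hat r}=G_{\hat\theta\hat\theta}$ indeed reduces to (\ref{ode_for_Buchdahl}) with an unchanged right-hand side. Both are transparent from the structure of the metric ansatz, which is why — as for Theorem {\bf Isotropic 1} and the companion Buchdahl-like results — a one-line ``\emph{vide} equation (\ref{ode_for_Buchdahl})'' suffices.
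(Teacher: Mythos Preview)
Your proposal is correct and matches the paper's own argument exactly: the paper's proof is simply ``By inspection. (Note strong similarities to the discussion for isotropic coordinates.)'', relying on precisely the observation you spell out, namely that (\ref{ode_for_Buchdahl}) depends on $\zeta$ only through $(\zeta'/\zeta)^2$, which is invariant under $\zeta\mapsto\zeta^{-1}$. Your elaboration of the logarithmic-derivative step and the closing remarks about $\gth{Buchdahl}\circ\gth{Buchdahl}=I$ are welcome additions but do not depart from the paper's route.
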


\begin{proof}
By inspection. (Note strong similarities to the discussion for isotropic
coordinates.)
\end{proof}

\subsection{Regularity conditions}

\begin{itemize}

\item
As usual
\begin{equation}
R(0)=0; \qquad \zeta(0)>0; \qquad \zeta'(0)=0.
\end{equation}

\item
Now consider the spatial geometry
\begin{equation}
\zeta(r)^{-2}\left\{ \, \d r^2 +  R(r)^2 \, \d \Omega^2\right\}
\end{equation}
and consider a small ``ball'' centered on the origin with coordinate radius $r$. 
The surface area of such a ball is
\begin{equation}
 S = 4 \, \pi \, \zeta(r)^{-2} \, R(r)^2,
\end{equation}
while its proper radius is
\begin{equation}
 \ell = \int_0^r \zeta(r)^{-1} \, \d r .
\end{equation}
If the geometry at the centre is to be smooth then these two measures of radius must asymptotically agree (for small balls):
\begin{equation}
 \ell = R/\zeta + \mathcal{O}[(R/\zeta)^2].
\end{equation}
But in terms of the original coordinate $r$ this implies
\begin{equation}
R'(0) = 1
\end{equation}

\item
Consider
\begin{equation}
G_{\hat r \hat r} =- (\zeta')^2 - \frac{\zeta^2[1-(R')^2]}{R^2} 
\end{equation}
this now yields the additional constraint
\begin{equation}
[(R')^2]' \to 0,
\end{equation}
which implies
\begin{equation}
R''(0)=0.
\end{equation}

\item
The combination
\begin{equation}
G_{\hat r \hat r} + G_{\hat t \hat t} = -{4(\zeta')^2} + {2 \zeta \zeta''} 
-{2\zeta( \zeta R''-2 \zeta' R') \over R},
\end{equation}
does not lead to any new constraint.

\item
Similarly
\begin{equation}
G_{\hat\theta\hat\theta} =   (\zeta')^2 + \frac{ \zeta^2 R''}{R},
\end{equation}
does not lead to any new constraint.

\item

So now let's collect all our regularity conditions:
\begin{equation}
R(0) = 0, \qquad R'(0) = 1, \qquad  R''(0) = 0, \qquad
\zeta(0) > 0, \qquad \mathrm{and} \qquad\qquad  \zeta'(0) = 0.
 \end{equation}

\item
Finally inserting all these conditions into the limit we quickly find
\begin{equation}
G_{\hat r \hat r}|_{\, 0} = G_{\hat \theta \hat \theta}|_{\, 0} = G_{\hat \phi \hat \phi}|_{\, 0}
=  2 \zeta(0)^2 \; R'''(0),
\end{equation}
\begin{equation}
G_{\hat t \hat t}|_{\, 0} =   6 \zeta(0) \zeta''(0)  -6 \zeta(0)^2 \; R'''(0) ,
\end{equation}
and
\begin{equation}
G_{\hat t \hat t}|_{\, 0} + 3 \, G_{\hat r \hat r}|_{\, 0} =   6 \zeta(0) \zeta''(0).
\end{equation}

\item
An analysis based on Taylor series and Laurent series again quickly leads to :
\begin{equation}
R(0) = 0, \qquad R'(0) = 1, \qquad   R''(0) = 0 \qquad \zeta(0) > 0, 
\qquad \mathrm{and} \qquad\zeta'(0) = 0 .
\end{equation}

\end{itemize}
We have now seen the Buchdahl transformation $\zeta(r)\to1/\zeta(r)$ arise in two separate settings --- in both isotropic coordinates and in the Buchdahl coordinates we have just investigated. This raises the question as to whether we can expect to encounter Buchdahl like transformations in more general settings, perhaps by suitably choosing the functional form of the metric. 
This will be our last topic to be investigated in this article.

\section{Generalized Buchdahl ansatz}
We now return to general diagonal coordinates and, based on our insight from dealing with isotropic coordinates and Buchdahl coordinates, make a specific \emph{ansatz} for the functional form of the metric components. Without loss of generality we choose the metric coefficients to be
\begin{equation}
\d s^2 = - \zeta(r)^{2} \; \d t^2 + \zeta(r)^{-2} \left\{ {\d r^2\over E(r) } +  R(r)^2 \d\Omega^2 \right\}.
\end{equation}
We calculate
\begin{equation}
G_{\hat r\hat r} =  - E (\zeta')^2 - {\zeta^2\;[1- E \,(R')^2]\over R^2},
\end{equation}
\begin{equation}
G_{\hat\theta\hat\theta} =  G_{\hat\phi\hat\phi} = +E (\zeta')^2 + {\zeta^2\;[2E R''+ E' R']\over 2 R} ,
\end{equation}
and
\begin{equation}
G_{\hat t \hat t} =  -3E (\zeta')^2+2E \zeta\zeta'' +\zeta E' \zeta' - {2E\zeta( \zeta R''-   2 R' \zeta' )\over R} 
+ {\zeta^2\;[1-E(R')^2]\over R^2} - {\zeta^2 E' R'\over R} .
\end{equation}
Imposing pressure isotropy supplies us with a first-order homogeneous
ODE for $\zeta(r)$:
\begin{equation} 
\label{ode_for_Buchdahl}
\left({\zeta'\over\zeta}\right)^2 =-{[2-2E(R')^2+2E R R''+R E' R']\over 4 E R^2}.
\end{equation}
This is very similar to the equation we obtained in isotropic coordinates and Buchdahl coordinates, but now in general diagonal coordinates --- the key point is that we have carefully chosen the functional form of the metric components. Rearranging this into an ODE for $E(r)$ yields a first-order linear differential equation
\begin{equation}
[\zeta^2 R R' ] E' + [  4 R^2 (\zeta')^2 + 2 R \zeta^2 R'' - 2 \zeta^2 (R')^2 ] E + 2 \zeta^2 = 0.
\end{equation}
(In contrast, rearranging this into an ODE for $R(r)$ yields a second-order nonlinear differential equation which does not seem to be particularly useful.)

\begin{theorem}[\bf Generalized Buchdahl]
  If $\{\zeta(r),E(r),R(r)\}$ describes a perfect fluid then so does
  $\{\zeta(r)^{-1},E(r),R(r)\}$. This is the Buchdahl transformation in yet
  another disguise (now in its most general setting).  The geometry defined by holding $E(r)$ and $R(r)$ fixed and
 transforming the given perfect fluid sphere
 \begin{equation}
\d s^2 = - \zeta(r)^{2} \; \d t^2 + \zeta(r)^{-2} \left\{ {\d r^2\over E(r) } +  R(r)^2 \d\Omega^2 \right\}.
\end{equation}
into the new geometry
\begin{equation}
\d s^2 = - \zeta(r)^{-2} \; \d t^2 + \zeta(r)^{2} \left\{ {\d r^2\over E(r) } +  R(r)^2 \d\Omega^2 \right\}.
\end{equation}
yields a spacetime that is also a perfect fluid sphere. 
That is, the mapping
\begin{equation}
\gth{Generalized Buchdahl}: \left\{ \zeta, E, R  \right\} \mapsto \left\{\zeta^{-1} , E, R \right\}
\end{equation}
takes perfect fluid spheres into perfect fluid spheres.
\end{theorem}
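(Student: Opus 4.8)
The plan is to prove the statement \emph{by inspection}, in exact parallel with the proofs of Theorems \textbf{Isotropic 1} and \textbf{Buchdahl}. The crucial observation is that the pressure isotropy condition in this \emph{ansatz} has been massaged (equation (\ref{ode_for_Buchdahl})) into the form
\begin{equation}
\left({\zeta'\over\zeta}\right)^2 = -{[2-2E(R')^2+2E R R''+R E' R']\over 4 E R^2},
\end{equation}
in which $\zeta$ enters \emph{only} through the combination $(\zeta'/\zeta)^2$, while $E(r)$ and $R(r)$ appear on the right-hand side but $\zeta$ itself does not. Hence the first step is simply to note that the right-hand side depends only on $E$ and $R$, which are held fixed by the map $\gth{Generalized Buchdahl}$.

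The second step is the elementary algebraic identity
\begin{equation}
{(\zeta^{-1})' \over \zeta^{-1}} = {-\zeta'/\zeta^2 \over 1/\zeta} = -{\zeta'\over\zeta},
\end{equation}
so that
\begin{equation}
\left({(\zeta^{-1})' \over \zeta^{-1}}\right)^2 = \left({\zeta'\over\zeta}\right)^2.
\end{equation}
Therefore if $\{\zeta,E,R\}$ satisfies (\ref{ode_for_Buchdahl}), then so does $\{\zeta^{-1},E,R\}$, since the left-hand side is invariant and the right-hand side is untouched. Combined with the fact that satisfying the isotropy ODE is precisely the condition for the geometry to be a perfect fluid sphere, this establishes that $\gth{Generalized Buchdahl}$ maps perfect fluid spheres to perfect fluid spheres. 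One should also remark that, as with the earlier Buchdahl transformations, $\gth{Generalized Buchdahl} \circ \gth{Generalized Buchdahl} = I$, since applying $\zeta\mapsto\zeta^{-1}$ twice returns the original $\zeta$.

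There is no real obstacle here: the entire content of the theorem is that the \emph{ansatz} was chosen so that the isotropy constraint collapses to a condition depending on $\zeta$ only through $(\zeta'/\zeta)^2$. The only thing one must be slightly careful about is to confirm that the derivation of (\ref{ode_for_Buchdahl}) from $G_{\hat r\hat r} = G_{\hat\theta\hat\theta}$ (using the explicit Einstein-tensor components listed above for this \emph{ansatz}) is correct and that the $\zeta$-dependence genuinely enters only quadratically through $\zeta'/\zeta$; this is a routine but necessary check. Once that is in hand, the proof is literally one line. The \textbf{main obstacle}, such as it is, is purely expository: making clear \emph{why} this particular functional form of the metric — with the same conformal factor $\zeta^2$ multiplying $\d t^2$ and $\zeta^{-2}$ multiplying the spatial part — is the feature responsible for the symmetry, so that the reader sees this as the natural "most general setting" for the Buchdahl transformation rather than a lucky accident.
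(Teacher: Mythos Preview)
Your proposal is correct and is exactly the approach the paper takes: the paper's own proof is literally ``By inspection'', pointing to the strong similarities with the isotropic and Buchdahl cases, and you have simply spelled out what that inspection consists of (namely that $\zeta$ enters the isotropy ODE only through $(\zeta'/\zeta)^2$, which is invariant under $\zeta\mapsto\zeta^{-1}$). There is nothing to add or correct.
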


\begin{proof}
By inspection. (Note very strong similarities to the discussion for isotropic
coordinates and Buchdahl coordinates. Note that because we are now working in general diagonal coordinates, where the metric has three independent components, this is the most general result we can hope for for a static spherically symmetric spacetime.)
\end{proof}

The regularity conditions in this case can be extracted from a minor variant of the previous analysis, and reduce to:
\begin{equation}
R(0) = 0, \qquad E(0) [R'(0)]^2 = 1, \qquad  E'(0) = - 2 R''(0)/[R'(0)]^3 = 0, \qquad
\zeta(0) > 0, \qquad \mathrm{and} \qquad \zeta'(0) = 0.
 \end{equation}
Finally inserting all these conditions into the central limit we quickly find
\begin{equation}
G_{\hat r \hat r}|_{\, 0} = G_{\hat \theta \hat \theta}|_{\, 0} = G_{\hat \phi \hat \phi}|_{\, 0}
=  {\zeta(0)^2 \; R'''(0)\over R'(0)^3} - {3\zeta(0)^2 R''(0)^2\over R'(0)^4} + {\zeta(0)^2 E''(0)\over 2},
\end{equation}
\begin{equation}
G_{\hat t \hat t}|_{\, 0} =   
{-3\zeta(0)^2 \; R'''(0)\over R'(0)^3} + {9\zeta(0)^2 R''(0)^2\over R'(0)^4} - {3\zeta(0)^2 E''(0)\over 2}
+{6 \zeta(0) \zeta''(0)\over R'(0)^2},
\end{equation}
and
\begin{equation}
G_{\hat t \hat t}|_{\, 0} + 3 \, G_{\hat r \hat r}|_{\, 0} =   {6 \zeta(0) \zeta''(0)\over R'(0)^2}.
\end{equation}

In addition, we could use the fact that the isotropy condition, when viewed as a differential equation in $E(r)$, is first-order linear to develop yet another variant on the theorem {\bf General diagonal 1}. As no new significant insight is gained we suppress the details.

\section{Discussion}

In this article, we have developed several new and significant transformation theorems that map perfect fluid spheres to perfect fluid spheres using both ``usual'' and ``unusual'' coordinate systems --- such as Schwarzschild (curvature), isotropic, Gaussian polar (proper radius), Synge isothermal (tortoise), and Buchdahl coordinates. In each case we developed at least one such transformation
theorem, while in several cases we have been able to develop multiple transformation
theorems.  

We have also investigated regularity conditions at the centre of the fluid sphere in all these coordinate systems. The meaning of regularity is that (at a minimum) we want the spacetime geometry to be well-defined and smooth everywhere. We can calculate the regularity conditions by hand, but note that  there is another way to more mechanically derive these regularity conditions by using Taylor and Laurent series, with calculations aided by symbolic manipulation programs such as {\sf Maple}. We have found it convenient to use this observation to provide a crosscheck on our explicit computations.

In summary, we have now extended the algorithmic technique originally introduced in reference~\cite{Rahman} and extended in references~\cite{Lake} and~\cite{Martin1, Petarpa1, Petarpa2,  Petarpa:thesis,  Petarpa:Greece, Petarpa:MG11} to many other coordinate systems and many other functional forms for the metric.

\section*{Acknowledgements}

This research was supported by the Marsden Fund administered by the Royal Society of New Zealand. 

PB was supported by a Royal Thai Scholarship and a Victoria University Small Research Grant.



\end{document}